\numberwithin{equation}{section}
\theoremstyle{plain}
\begin{document}

\begin{frontmatter}
\title{Profile Likelihood Biclustering}
\runtitle{Profile Likelihood Biclustering}

\begin{aug}
\author{\fnms{Cheryl} \snm{Flynn}\ead[label=e1]{cflynn@research.att.com}}
\address{AT\&T Labs Research\\
New York, NY\\
\printead{e1}}

\author{\fnms{Patrick} \snm{Perry}
\ead[label=e3]{patperry@gmail.com}}

\address{Oscar Health\\
New York, NY\\
\printead{e3}}

\runauthor{C. Flynn and P. Perry}

\affiliation{AT\&T Labs Research and Oscar Health}

\end{aug}

%\maketitle

%\numberwithin{equation}{section}
%\theoremstyle{plain}
%\newtheorem{thm}{Theorem}[section]

\newcommand{\Keywords}[1]{\par\noindent
{\small{KEY WORDS}: #1}}

\begin{abstract}
%Biclustering, the process of simultaneously clustering the rows and columns
%of a data matrix, is a popular and effective tool for finding structure in a
%high-dimensional dataset.  A variety of biclustering algorithms exist, and
%they have been applied successfully to data sources ranging from review-website data to gene
%expression arrays.  Currently, while biclustering
%appears to work well in practice, there have been no theoretical guarantees
%about its performance.  We address this shortcoming with a new biclustering
%algorithm based on profile likelihood and a theorem
%providing sufficient conditions for consistent biclustering when both dimensions
%of the data matrix tend to infinity.  This theorem applies to a broad range of data
%matrices, including binary, count, and continuous data.
%We propose a new approximation algorithm to implement profile likelihood biclustering
%and we show that our algorithm has low computational complexity and performs well in practice.
%We demonstrate our results through
%an empirical study that includes examples from collaborative filtering and microarray analysis.

Biclustering, the process of simultaneously clustering the rows and columns of
a data matrix, is a popular and effective tool for finding structure in a
high-dimensional dataset.  Many biclustering procedures appear to work well in
practice, but most do not have associated consistency guarantees.  To address
this shortcoming, we propose a new biclustering procedure based on profile
likelihood.  The procedure applies to a broad range of data modalities,
including binary, count, and continuous observations.  We prove that the
procedure recovers the true row and column classes when the dimensions of the
data matrix tend to infinity, even if the functional form of the data
distribution is misspecified.  The procedure requires computing a
combinatorial search, which can be expensive in practice.  Rather than
performing this search directly, we propose a new heuristic optimization
procedure based on the Kernighan-Lin heuristic, which has nice computational
properties and performs well in simulations.  We demonstrate our procedure
with applications to congressional voting records, and microarray analysis.

%\Keywords{Biclustering; Block Model; Profile Likelihood; Congressional Voting; Microarray Data.}

\end{abstract}

\begin{keyword}[class=MSC]
\kwd[Primary ]{62-07}
\kwd[; secondary ]{62G20}
\end{keyword}

\begin{keyword}
\kwd{Biclustering}
\kwd{Block Model}
\kwd{Profile Likelihood}
\kwd{Congressional Voting}
\kwd{Microarray Data}
\end{keyword}
%\tableofcontents
\end{frontmatter}

\section{Introduction}\label{S:introduction}

Suppose we are given a data matrix $\mX = [X_{ij} ]$, and our goal is to
cluster the rows and columns of $\mX$ into meaningful groups.  For example,
$X_{ij}$ can indicate whether or not user~$i$ has an interest in product~$j$,
and our goal is to segment users and products into relevant subgroups.
Alternatively, $X_{ij}$ could be the log activation level of gene~$j$ in
patient~$i$; our goal is to seek groups of patients with similar genetic
profiles, while at the same time finding groups of genes with similar
activation levels.  The general simultaneous clustering problem is known by
many names, including direct clustering~\citep{hartigan72}, block
modeling~\citep{arabie78}, biclustering~\citep{mirkin96}, and
co-clustering~\citep{dhillon01}.

Empirical results from a broad range of disciplines indicate that
biclustering is useful in practice.  \citet{ungar98} and \citet{hofmann99}
found that biclustering helps identify structure in collaborative filtering
contexts with heterogeneous users and sparsely-observed preferences.
\citet{eisen98} used microarray data to simultaneously cluster genes and
conditions, finding that genes with similar functions often cluster together.
\citet{harpaz10} applied biclustering methods to a Food and Drug
Adminstration report database, identifying associations between certain active
ingredients and adverse medical reactions.  Several other applications of
biclustering exist \citep{cheng00,getz00,lazzeroni02,kluger03};
\citet{madeira04} give a comprehensive survey.

Practitioners interested in biclustering have used a variety of different
algorithms to achieve their results.  Clearly, many of these algorithms work
well in practice, but they are often ad-hoc, and there are no rigorous
guarantees as to their performance.  Without these guarantees practicioners cannot be assured
that their discoveries from biclustering will generalize or be reproducible; collecting more data
may lead to completely different clusters.

There are two approaches to evaluating the theoretical performance of these procedures.
The first is to define a higher-level learning task, and evaluate procedures using
a task-dependent measure of generalization performance \citep{seldin09,seldin10}.
We instead consider the alternative approach, which is to consider the problem purely
as an unsupervised learning task.  In this case, the procedure is evaluated based on the identified biclusters, where
a reasonable goal is consistent biclustering.

Our first contribution is to formalize biclustering as an estimation problem.
We do this by introducing a probabilistic model for the data matrix $\mX$,
where up to permutations of the rows and columns, the expectation of $\mX$ has
a block structure.  Next, we make distributional assumptions on the elements
of $\mX$ and derive a clustering objective function via
profile likelihood~\citep{murphy00}.  Finally, we show that the maximum
profile likelihood estimator performs well when the distributional assumptions
are not satisfied, in the sense that it is still consistent.  To our
knowledge, this is the first general consistency result for a biclustering
algorithm.

Unfortunately, it is computationally intractable to compute the maximum
profile likelihood estimator.  It is for this reason that \citet{ungar98}, who
used a similar probabilistic model for the data matrix, dismissed
likelihood-based approaches as computationally infeasible.  Our second
contribution, then, is to propose a new approximation algorithm for finding a
local maximizer of the biclustering profile likelihood.  Our algorithm is
based on the Kernigham-Lin heuristic \citep{kernighan70}, which was employed
by \citet{newman06} for clustering network data. This is a greedy optimization
procedure, so it is not guaranteed to find a global optimum. To mitigate this,
we run the fitting procedure repeatedly with many random initializations; as
we increase the number of initializations, the probability that the algorithm
finds the global optimum increases. We show that this procedure has low
computational complexity and it performs well in practice.

%Our work was inspired by clustering methods for
%symmetric binary networks.  In that context, $\mX$ is an $n$-by-$n$ symmetric
%binary matrix, and the clusters for the rows of $\mX$ are the same as the
%clusters for the columns of $\mX$.  \citet{bickel09} used methods similar to
%those used for proving consistency of M-estimators to derive results for
%network clustering when $n$ tends to infinity.  This work was later extended
%by \citet{choi11}, who allow the number of clusters to increase with $n$;
%\citet{zhao11a}, who allow for nodes not belonging to any cluster; and
%\citet{zhao12}, who incorporate individual-specific effects.  In parallel to
%to this work, \citet{rohe11} study the performance of spectral clustering for
%symmetric binary networks; \citet{rohe12} study spectral methods for
%unsymmetric binary networks; and \citet{ames14} study recovering the correct partitioning of bipartite 
%graphs into disjoint subgraphs.  In further related work, \citet{gao16} study problems related to biclustering, but focus on
%the recovery of the mean matrix rather than recovery of the correct biclustering; 
%\citet{choi14} study community detection in binary arrays when the blockmodel assumption is misspecified;
%\citet{mariadassou15} study latent and stochastic block models where the row and column cluster assignments are treated as random variables;
%and \citet{razaee19} study the problem of matched community detection in bipartite networks with node covariates.

Our work was inspired by clustering methods for
symmetric binary networks.  In that context, $\mX$ is an $n$-by-$n$ symmetric
binary matrix, and the clusters for the rows of $\mX$ are the same as the
clusters for the columns of $\mX$.  \citet{bickel09} used methods similar to
those used for proving consistency of M-estimators to derive results for
network clustering when $n$ tends to infinity.  This work was later extended
by \citet{choi11}, who allow the number of clusters to increase with $n$;
\citet{zhao11a}, who allow for nodes not belonging to any cluster; and
\citet{zhao12}, who incorporate individual-specific effects.  In parallel to this work, 
theoretical and computational advancements have been made for community 
detection in symmetric networks 
using pseudo-likelihood methods \citep{amini2013}, variational methods 
\citep{daudin2008, celisse2012, bickel2013}, belief propagation \citep{mossel2016}, 
spectral clustering \citep{rohe11, fishkind2012, lei2015, jin2015}, and semidefinite 
programming \citep{amini2018, guedon2014}.  We refer the reader to \citet{zhao2018} 
and \citet{abbe2018} for recent surveys.
In the context of biclustering, \citet{rohe12} study spectral methods for
unsymmetric binary networks; \citet{ames14} study recovering the correct partitioning 
of bipartite graphs into disjoint subgraphs; \citet{gao16} study problems related to 
biclustering, but focus on the recovery of the mean matrix rather than recovery of 
the correct biclustering; \citet{choi14} study community detection in binary arrays 
when the blockmodel assumption is misspecified;
\citet{mariadassou15} study latent and stochastic block models where the row 
and column cluster assignments are treated as random variables; and \citet{razaee19} 
study the problem of matched community detection in bipartite networks with node covariates.

In our report we have extended methods originally developed for an extremely
specialized context (symmetric binary networks) to handle clustering for
arbitrary data matrices, including non-symmetic networks and real-valued
data entries.  Using standard conditions, we have been able to
generalize the \citet{bickel09} results beyond Bernoulli random variables.  To
our knowledge, this is the first time methodologies for binary networks have
been used to study general biclustering methods.  Notably, our extensions can
handle a variety of data distributions, and they can handle both dense and
sparse data matrices.

The main text of the paper is organized as follows.  First
Section~\ref{S:set-up} describes the theoretical setup and
Section~\ref{S:consistency-results} presents our main result with a heuristic
proof.  Then, Section~\ref{S:theoretical} describes the formal theoretical
framework and states the rigorous consistency results.  Next,
Section~\ref{S:alg} presents our approximation algorithm.  Using this
algorithm, Section~\ref{S:simulations} corroborates the theoretical findings
through a simulation study, and Section \ref{S:applications} presents
applications to a microarray and a congressional voting dataset.
Section~\ref{S:conclusion} presents some concluding remarks.  The
appendices include additional proofs, empirical results, 
and an application to a movie review dataset.

\section{Estimation problem and criterion functions}\label{S:set-up}

Our first task is to formalize biclustering as an estimation problem.  To
this end, let $\mX = [ X_{ij} ] \in \reals^{m \times n}$ be a data matrix.  We
follow the network clustering literature and posit existence of $K$~\emph{row
classes} and $L$~\emph{column classes}, such that the mean value of entry
$X_{ij}$ is determined solely by the classes of row~$i$ and column~$j$.  That
is, there is an unknown row class membership vector $\vc~\in~K^m$, an unknown
column class membership vector $\vd~\in~L^n$, and an unknown mean matrix
$\mM~=~[\mu_{kl}]~\in~\reals^{K \times L}$ such that
\begin{equation}\label{E:block-model}
  \E X_{ij} = \mu_{c_i d_j}.
\end{equation}
We refer to model~\eqref{E:block-model} as a \emph{block model}, after the
related model for undirected networks proposed by~\citet{holland83}.  Under the
assumptions of the block model, biclustering the rows and columns of the data matrix
is equivalent to estimating $\vc$ and $\vd$.

Not all block models give rise to well-defined estimation problems.  To ensure
that $K$ and $L$ are well-defined, we require that each class has at least one
member, and that no two classes have the same mean vector.  Formally, define
row class proportion vector $\vp \in \reals^{K}$ with element $p_a = m^{-1}
\sum_i \I(c_i = a)$ equal the proportion of nodes with row class $a$.  Also,
define column class proportion vector $\vq \in \reals^{L}$ with element
$q_b~=~n^{-1}\sum_{j}\I(d_j=b)$ equal to the proportion of nodes with column
class $b$.  We require that every element of~$\vp$ and~$\vq$ be nonzero.  To
ensure that the mean vectors of the row classes are distinct, we require that
no two rows of $\mM$ are identical.  Similarly, we require that no two columns
of $\mM$ are identical.

We estimate the clusters by assigning labels to the rows and columns of $\mX$,
codified in vectors $\vg \in K^m$ and $\vh \in L^n$.  Ideally, $\vg$ and $\vh$
match $\vc$ and $\vd$.  Note we are assuming that the true numbers of row and
column clusters, $K$ and $L$, are known, or they have been correctly estimated
by some model selection procedure.  We measure the performance of a particular
label assignment through the corresponding confusion matrix.  Specifically,
for row and column label assignments $\vg$ and $\vh$, define normalized
confusion matrices $\mC \in \R^{K \times K}$ and $\mD \in \R^{L \times L}$
by
\[
  C_{ak}
  =
  \frac{1}{m} \sum_{i} I(c_i = a, g_i = k),
  \qquad
  D_{bl}
  =
  \frac{1}{n} \sum_{j} I(d_j = b, h_j = l).
\]
Entry $C_{ak}$ is the proportion of nodes with class $a$ and label $k$; entry
$D_{bl}$ is defined similarly. These matrices are normalized so that $\mC
\vone = \vp$ and $\mD \vone = \vq$ are the class proportion vectors, and
$\mC^T \vone = \hat{\vp}$ and $\mD^T \vone = \hat{\vq}$ are the label
proportion vectors.  If $\mC$ and $\mD$ are diagonal, then the assigned labels
match the true classes.  More generally, if $\mC$ and $\mD$ can be made diagonal
by permuting their columns, then the partition induced by the labels matches
the partition induced by the classes.  The goal, then, is to find row and
column labellings such that $\mC$ and $\mD$ are permutations of diagonal
matrices.

In practice, we cannot estimate $\mC$ and $\mD$ directly, because we do not
have knowledge of the true row and column classes.  To evaluate the quality of
a biclustering, we need a surrogate criterion function.  Analogously to
\citet{bickel09}, we employ profile likelihood for this purpose.

In Bickel and Chen's setting, the data are binary, so there is a
natural data likelihood which arises from the Bernoulli distribution.  Our
setting is more general, with $\mX_{ij}$ allowed to be a count or a continuous
measurement, so there are many possible choices for the element densities.  We
proceed by initially assuming that the elements of $\mX$ are sampled from
distributions in a single-parameter exponential family.  Conditional on $\vc$
and $\vd$, the elements of $\mX$ are independent, and entry $X_{ij}$ has
density $g(x ; \eta_{c_i d_j})$ with respect to some base measure~$\nu$, where
\[
  g(x; \eta) = \exp\{ x \eta - \psi(\eta)\};
\]
$\psi(\eta)$ is the cumulant generating function, and
$\eta_{kl} = (\psi')^{-1}(\mu_{kl})$ is the natural parameter.  Later, we will
relax the assumption of the specific distributional form.

With labels $\vg$ and $\vh$, the complete data log-likelihood is
\begin{align*}
  \lik(\vg, \vh, \mM) & =
    \sum_{k,l}
        \sum_{i,j}
            \{X_{ij} \eta_{kl} - \psi(\eta_{kl})\}\I(g_i=k,h_j=l)\\
    & = m n
    \sum_{k,l}
      \hat p_k \, \hat q_l \,
      \{ \bar X_{kl} \, \eta_{kl} - \psi(\eta_{kl}) \},
\end{align*}
where
\(
  \hat p_k =
  %\hat p_k(\vg) =
  m^{-1} \sum_i \I(g_i = k)
\)
and
  %\qquad\text{and}\qquad
\(
  \hat q_l =
  %\hat q_l(\vh) =
  n^{-1} \sum_j \I(h_j = l)
\)
are the estimated class proportions and
\(
  \bar X_{kl} =
  %\bar X_{kl} (\vg, \vh) =
  \{ \sum_{i,j} \I(g_i = k, h_j = l) \}^{-1}
  \sum_{i,j} X_{ij} \I(g_i = k, h_j = l)
\)
is the estimated cluster mean.  We get the profile log-likelihood by
maximizing the log-likelihood over the mean parameter matrix $\mM$:
\[
  \plik(\vg, \vh)
    = \sup_{\mM} \lik(\vg, \vh, \mM)
    = m n \sum_{k,l} \hat p_k \, \hat q_l \, \psi^\ast(\bar X_{kl}),
\]
where $\psi^\ast(x) = \sup_{\eta} \{ x \eta - \psi(\eta) \}$ is the convex
conjugate of $\psi$.  We refer to $\psi^\ast$ as the relative entropy function
since $\psi^\ast(\mu)$ is equal to the Kullback-Leiber divergence of the base
measure~$\nu$ from the distribution in the exponential family with mean~$\mu$
\citep{brown1986}.

Following the above derivation, a natural criterion for the quality of
labeling $(\vg, \vh)$ is the profile log-likelihood $\plik(\vg, \vh)$.  In the
sequel, we consider a far more general setting.  We consider criterion
functions of the form
\begin{equation}\label{E:criterion-general}
  F(\vg, \vh) = \sum_{k,l} \hat p_k \, \hat q_l \, f(\bar X_{kl}),
\end{equation}
where $f$ is any smooth convex function.  Following the derivation above, we
refer to $F$ as a profile likelihood and we refer to $f$ as the corresponding
relative entropy function.  However, we do not assume that likelihood has been
correctly specified.  In particular,  the elements of $\mX$ can have
arbitrary distributional forms under the assumptions of the 
block model~\eqref{E:block-model}, not necessarily belonging to any exponential
family.  We explicitly allow for heteroscedasticity and distributional
misspecification.  We show that under mild technical conditions, the maximizer
of $F$ is a consistent estimator of the true row and column classes.

\section{Heuristic justification}\label{S:consistency-results}

In Section~\ref{S:set-up}, we defined a formal biclustering estimation problem
and we motivated a class of criterion functions for this problem based on
profile likelihood.  In this section, we investigate the behavior of the
criterion functions.  In particular, we outline a heuristic argument which
shows that the row and column labels found by maximizing these criterion
functions are good estimates of the true row and column classes.
Formal statements of the results and their proofs are given in Section~\ref{S:theoretical}
and Appendix~\ref{app-theory}.

As noted in Section~\ref{S:introduction}, the main
thrust of our theoretical results are similar to that used in the literature
on clustering for symmetric binary networks initiated by \citet{bickel09} and
extended by \citet{choi11}, \citet{zhao11a} and \citet{zhao12}.
The main point of departure from this previous work are that we work with
arbitrary data modalities instead of symmetric binary matrices.

%There are three main points of departure
%from this previous work.  First, the literature on clustering for symmetric
%binary networks works only with Bernoulli random variables, whereas we
%consider arbitrary distributions that satisfy a Lindeberg-type condition.
%Second, these papers work with bounded criterion functions, while
%we consider potentially unbounded criterion functions.  Lastly, the results
%for symmetric binary networks require an additional condition: that the
%criterion function is maximized at the population parameters.  We show that
%this condition is satisfied as long as the parameters are identifiable and the
%relative entropy function is strictly convex.

Let $\mX \in \reals^{m \times n}$ be a data matrix drawn from an identifiable
block model~\eqref{E:block-model} with row and column classes $\vc \in K^m$
and $\vd \in L^n$ and mean matrix $\mM \in \reals^{K \times L}$.  Let $\vp$,
and $\vq$ be as defined in Section~\ref{S:set-up}.  For any row and column
labeling $\vg$ and $\vh$, let $\vhp$, $\vhq$, and $\mbX$ be the corresponding
estimates of $\vp$, $\vq$, and $\mM$, and let $\mC$ and $\mD$ be the confusion
matrices.  Let $F$ be a profile likelihood criterion function as
in~\eqref{E:criterion-general} with corresponding relative entropy function
$f$, assumed to be smooth and strictly convex.

We now outline a series of results which show that the maximizers of $F$ are
good estimates of the true row and column classes.

\begin{proposition}\label{P:population-crit}
The criterion function $F$ is uniformly close to a ``population criterion
function'' $G$ which only depends on the confusion matrices.
\end{proposition}

If $n$ and $m$ are large, then for any choice of $\vg$ and $\vh$, the
estimated cluster mean $\bar X_{kl}$ will be close to $E_{kl}$, the average
value of $\E X_{ij}$ over the block defined by labels $k$ and $l$.  This
quantity can be computed in terms of the confusion matrices as
\[
  E_{kl}
  =
  \frac{
     \sum_{i,j} \sum_{a,b} \mu_{ab} \I(c_i = a, g_i = k) \I(d_j = b, h_j = l)
  }{
     \sum_{i,j} \I(g_i = k, h_j = l)
  }
  =
  \frac{
    [\mC^\trans \, \mM \, \mD]_{kl}
  }{
    [\mC^\trans \vone]_k  [ \mD^\trans \vone]_l
  }.
\]
By applying Bernstein's inequality, one can show that
$E_{kl}$ is close to $\bar X_{kl}$ uniformly over all
choices of $\vg$ and $\vh$.  Thus, we get the population criterion
function by replacing $\bar X_{kl}$ with $E_{kl}$.

For each non-negative vector $\vt \in \reals_+^{N}$ define
$\sC_{\vt}$ to be the set of $N \times N$ normalized confusion matrices
with fixed row sums:
\(
  \sC_{\vt}
  =
  \{
    \mW \in \R_{+}^{N \times N} : \mW \vone = \vt
  \}.
\)
The population version of $F$ is a function of the row and column confusion
matrices, $G : \sC_{\vp} \times \sC_{\vq} \to \reals$, with
\begin{align*}
  G(\mC, \mD)
  &= \sum_{k,l} [\mC^T \vone]_k \, [\mD^T \vone]_l \,
     f\Big(\frac{[\mC^T \mM \mD]_{kl}}
                {[\mC^T \vone]_k \, [\mD^T \vone]_l}
     \Big).
\end{align*}
Since $\bar X_{kl}$ is uniformly close to $E_{kl}$, under mild regularity
conditions on $f$, the criterion $F(\vg, \vh)$ is uniformly close to
$G(\mC, \mD)$.  Proposition~\ref{L:pop-crit}
contains a rigorous statement of this result.

\begin{proposition}\label{P:self-consistent}
The population criterion function $G$ is self-consistent.
\end{proposition}

Self-consistency is an important property for any criterion function, which
implies that in the absence of noise, the criterion function will be maximized
at the truth~\citep{tarpey96}.  In our context, self-consistency means that $G$
is maximized when $\mC$ and $\mD$ are permutations of diagonal matrices.

The self-consistency of $G$ follows from the strict convexity of $f$:
\begin{align*}
  G(\mC, \mD)
    & = \sum_{k,l} [\mC^T \vone]_k \, [\mD^T \vone]_l \,
        f\Big(\frac{[\mC^T \mM \mD]_{kl}}
                {[\mC^T \vone]_k \, [\mD^T \vone]_l}
        \Big)\\
    & \leq \sum_{k,l} \sum_{a,b} C_{ak} D_{bl} f(\mu_{ab})\\
    & = \sum_{a,b} p_a \, q_b \, f(\mu_{ab}).
\end{align*}
If $\mM$ has no two identical rows and no two identical columns, then exact
equality holds only when $\mC$ and $\mD$ are permutations of diagonal
matrices.  Thus, $G$ is maximized when the row and column class partitions
match the label partitions.  Proposition~\ref{L:pop-perturb} gives
a refined self-consistency result with a more complete characterization of the
behavior of $G$ near its maxima.

\begin{proposition}\label{P:consistent}
Under enough regularity, the maximizer of the criterion function $F$ is close
to the true row and column class partition.
\end{proposition}

This is a direct consequence of
Propositions~\ref{P:population-crit}~and~\ref{P:self-consistent}.  The
criterion $F$ is uniformly close to the population criterion $G$, and $G$ is
maximized at the true class partitions.  Thus, the maximizer of $F$ is close
to the maximizer of $G$.  Importantly, Proposition~\ref{P:consistent} does not
require any distributional assumptions on the data matrix $\mX$ beyond its
expectation satisfying the block model.  In particular this result can be
applied to binary matrices, count data, and continuous data.
Theorems~\ref{T:consistency}~and~\ref{T:finite-sample} contain precise
statements analogous to Proposition~\ref{P:consistent}.

\section{Rigorous Theoretical Results}\label{S:theoretical}

Here we provide formal statements of the main result from
Section~\ref{S:consistency-results}.  The proofs of these results
are contained in Appendix~\ref{app-theory}

We work in an asymptotic framework, where the dimensions of the data matrix
tend to infinity.  Let $\mX_n \in \R^{m \times n}$ be a sequence of data
matrices indexed by $n$, with $m = m(n)$ and $m(n) \to \infty$ as $n \to
\infty$. We will also suppose that $n/m \to \gamma$ for some finite constant
$\gamma>0$; this assumption is not essential, but it simplifies the assumption
and result statements.

Suppose that
for each $n$ there exists a row class membership vector $\vc_n \in K^m$ and a
column class membership vector $\vd_n \in L^n$. We assume that there exist
vectors $\vp \in \R^K$ and $\vq \in \R^L$ such that $\hat p_k(\vc) \to p_k$
and $\hat q_l(\vd) \to q_l$ as $n \to \infty$ almost surely for all $k$ and
$l$; this assumption is satisfied, for example, if the class labels are
independently drawn from a multinomial distribution.  When there is no
ambiguity, we omit the subscript~$n$.

We define the mean matrix $\mM = [ \mu_{kl} ] \in \reals^{K \times L}$ as in
Section~\ref{S:consistency-results}, but allow it to possibly vary with $n$.
To model sparsity in $\mX$, we allow $\mM$ to tend to $\mathbf{0}$.  To avoid
degeneracy, we suppose that there exists a sequence $\rho$ and a fixed matrix
$\mM_0 \in \reals^{K \times L}$ such that~$\rho^{-1}~\mM~\to~\mM_0$.  Denote
by $\sM_0 \in \R$ the convex hull of the entries of $\mM_0$ defined as
\[
\sM_0 \coloneqq \left\{\sum_{k = 1}^K \sum_{l=1}^L \lambda_{kl}[\mM_0]_{kl} 
:
\sum_{k = 1}^K \sum_{l=1}^L \lambda_{kl} = 1, \lambda_{ij} \in [0, 1]\right\}.
\]
Let $\sM$ be a
neighborhood of $\sM_0$.

To adjust for the sparsity, we redefine the criterion and population criterion
functions as
\begin{align*}
  F(\vg, \vh) &= \sum_{k,l} \hat p_k \hat q_l f(\rho^{-1} \bar X_{kl}), \\
  G(\mC, \mD) &= \sum_{k,l} [\mC^\trans \vone]_k [\mD^\trans \vone]_l
                            f\Big(\frac{[\mC^T \mM_0 \mD]_{kl}}
                                       {[\mC^T \vone]_k \, [\mD^T \vone]_l}\Big).
\end{align*}
We discuss these modifications and the role of $\rho$ in Section~\ref{S:rho}.

We only consider nontrivial partitions; to this end, for $\varepsilon > 0$,
define $\sJ_\varepsilon$, the set of nontrivial labellings as
\[
  \sJ_\varepsilon = \{\vg, \vh : \hat p_k(\vg)>\varepsilon,\hat q_l(\vh)>\varepsilon\}.
\]

\subsection{Assumptions}\label{s:assumptions}

We require the following regularity conditions:
\begin{enumerate}
\renewcommand{\theenumi}{(C\arabic{enumi})}
\renewcommand{\labelenumi}{(C\arabic{enumi})}

\item \label{A:identifiable}
The biclusters are identifiable: no two rows of the $\mM_0$ are equal, and no
two columns of $\mM_0$ are equal.

\item \label{A:rate-convex}
The relative entropy function is locally strictly convex: $f''(\mu)>0$ when
$\mu \in \sM$.

\item \label{A:deriv-bound}
The third derivative of the relative entropy function is locally bounded:
$\abs{f'''(\mu)}$ is bounded when $\mu \in \sM$.

\item \label{A:avg-var}
The average variance of the elements is of the same order as $\rho$:
\begin{equation*}
    \limsup_{n \to \infty}
    \frac{1}{\rho m n}
    \sum_{i,j}
      \E[(X_{ij} - \mu_{c_i d_j})^2
         % \mid \vc, \vd
        ] < \infty.
\end{equation*}

\item \label{A:sparseness}
The mean matrix does not converge to zero too quickly:
\[
  \limsup_{n\to \infty} \rho \sqrt{nm} = \infty.
\]

\item \label{A:matrix-lindeberg}
The elements satisfy a Lindeberg condition:
for all $\varepsilon > 0$,
\begin{equation*}
   \lim_{n \to \infty}
    \frac{1}{\rho^2 m n}
    \sum_{i,j}
      \E[(X_{ij} - \mu_{c_i d_j})^2
         \I(\abs{X_{ij} - \mu_{c_i d_j}} > \varepsilon \rho \sqrt{m n})
         %\mid \vc, \vd
         ] = 0.
\end{equation*}
\end{enumerate}
Condition~\ref{A:identifiable} is necessary for the biclusters to be
identifiable, while \ref{A:rate-convex} and \ref{A:deriv-bound} are mild
regularity conditions on the entropy function.

Condition~\ref{A:avg-var} is trivially satisfied for dense data and is
satisfied for Binomial and Poisson data so long as $\rho^{-1}\mM \to \mM_0$.
However, this condition cannot handle arbitrary sparsity.  For example, if the
elements of $\mX$ are distributed as Negative Binomial random variables, then
condition~\ref{A:avg-var} requires that the product of the mean and the
dispersion parameter does not tend to infinity.  In other words, for sparse
count data, the counts cannot be too heterogeneous.

Condition~\ref{A:sparseness} places a sparsity restriction on the mean matrix.
\citet{zhao12} used the same assumption to establish weak consistency for
network clustering.  A variant Lyaponuv's condition \citep{varadhan01} implies
\ref{A:matrix-lindeberg}.  That is, if
\[
\lim_{n \to \infty} \frac{1}{( \rho \sqrt{m n})^{2+\delta}}
    \sum_{i,j}
    \E |X_{ij} - \mu_{c_i d_j}| ^{2+\delta}
    = 0
\]
for some $\delta>0$, then \ref{A:matrix-lindeberg} follows.  In particular,
for dense data ($\rho$ bounded away from zero), uniformly bounded $(2+\delta)$
absolute central moments for any $\delta > 0$ is sufficient.  For many types
of sparse data, including Bernoulli or Poisson data with $\rho$ converging to
zero, \ref{A:sparseness} is a sufficient condition for
\ref{A:matrix-lindeberg}.

\begin{theorem}\label{T:consistency}
Fix any $\varepsilon > 0$ with $\varepsilon < \min_a \{ p_a \}$ and
$\varepsilon < \min_b \{ q_b \}$.  Let $(\vhg, \vhh)$ satisfy $F(\vhg, \vhh) =
\max_{\sJ_\varepsilon } F(\vg, \vh)$.  If
conditions~\ref{A:identifiable}--\ref{A:matrix-lindeberg} hold, then all limit
points of $\mC(\vhg)$ and $\mD(\vhh)$ are permutations of diagonal matrices,
i.e.~the proportions of mislabeled rows and columns converge to zero in
probability.
\end{theorem}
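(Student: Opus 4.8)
The plan is to follow the M-estimation template of \cite{bickel09}, reorganized so that the bulk of the work is a single uniform law of large numbers over label pairs. Throughout write $Z_{ij} = X_{ij} - \mu_{c_i d_j}$, abbreviate $\mC = \mC(\vg)$ and $\mD = \mD(\vh)$, and note that the conditional mean of a block average,
\[
  \mu^\ast_{kl}(\vg,\vh) \;:=\; \E[\bar X_{kl} \mid \vc, \vd] \;=\; \frac{(\mC^{T}\mM\mD)_{kl}}{\hat p_k(\vg)\,\hat q_l(\vh)},
\]
is a convex combination of the entries $\mu_{ab}$ of $\mM$ with weights $C_{ak} D_{bl}/(\hat p_k \hat q_l)$. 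After dividing by $mn$, I would compare $F(\vg,\vh)/(mn)$ with the deterministic surrogate $A(\vg,\vh) := \sum_{kl} \hat p_k \hat q_l\, f(\mu^\ast_{kl}/\rho)$ obtained by replacing each block average by its mean.

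The first ingredient is a quantitative convexity bound. Because $\rho^{-1}\mM \to \mM_0$, for $n$ large every $\mu_{ab}/\rho$ lies in the interval $\sM$ on which (A\ref{A:rate-convex}) gives $f'' \ge \beta$ for some $\beta > 0$; applying strong convexity blockwise, summing, and using $\sum_k C_{ak} = \hat p_a(\vc)$ and $\sum_l D_{bl} = \hat q_b(\vd)$ yields, for every $(\vg,\vh) \in \sJ_\varepsilon$,
\[
  A(\vg,\vh) \;\le\; \sum_{ab} \hat p_a(\vc)\,\hat q_b(\vd)\, f(\mu_{ab}/\rho) \;-\; \tfrac{\beta}{2}\,\Phi_\rho(\vg,\vh), \qquad \Phi_\rho(\vg,\vh) := \frac{1}{\rho^2} \sum_{k,l,a,b} C_{ak} D_{bl}\, (\mu_{ab} - \mu^\ast_{kl})^2 \;\ge\; 0 .
\]
The right-hand sum tends to $L^\ast := \sum_{ab} p_a q_b\, f((\mM_0)_{ab})$, and $\Phi_\rho(\vg,\vh) \to \Phi_0(\mC,\mD)$ uniformly over $\sJ_\varepsilon$, where $\Phi_0$ is a continuous, nonnegative function of the confusion matrices on the compact set of pairs with row marginals $\vp$, $\vq$ and column marginals at least $\varepsilon$. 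Identifiability (A\ref{A:identifiable}) forces $\Phi_0(\mC,\mD) = 0$ precisely when $\mC$ and $\mD$ are permutations of diagonal matrices: if some fitted label pooled two distinct true row classes $a, a'$, then $\Phi_0(\mC,\mD) = 0$ would require rows $a$ and $a'$ of $\mM_0$ to agree in every column, a contradiction (and symmetrically for columns). Hence, by compactness, $\Phi_0$ is bounded below by a positive constant outside any neighborhood of the permuted-diagonal pairs.

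The substantive step is the uniform law of large numbers: I would show
\[
  \sup_{(\vg,\vh) \in \sJ_\varepsilon}\ \abs{\, F(\vg,\vh)/(mn) - A(\vg,\vh)\, }\ \longrightarrow\ 0 \quad\text{in probability},
\]
and, separately, that with the true labels $\bar X_{ab}$ concentrates on $\mu_{ab}$, so that $F(\vc,\vd)/(mn) \to L^\ast$ and $\sum_{ab} \hat p_a(\vc)\hat q_b(\vd) f(\mu_{ab}/\rho) \to L^\ast$ in probability; these last two are Chebyshev arguments using $\sum_{ij} \E[Z_{ij}^2 \mid \vc, \vd] = O(\rho mn)$ from (A\ref{A:avg-var}) together with the sparsity growth of (A\ref{A:sparseness}). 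For the uniform bound I would Taylor-expand $f(\bar X_{kl}/\rho)$ to second order about $\mu^\ast_{kl}/\rho$, with the cubic remainder controlled by (A\ref{A:deriv-bound}); writing $S_{kl} = \sum_{ij} Z_{ij}\,\I(g_i = k, h_j = l)$, so that $\bar X_{kl} - \mu^\ast_{kl} = S_{kl}/(mn\,\hat p_k \hat q_l)$, the whole difference is then bounded on $\sJ_\varepsilon$ in terms of $(m^2 n^2 \rho^2)^{-1} \sum_{kl} S_{kl}^2$ (the first-order term reduces to this by Cauchy--Schwarz), and showing this quantity is $o_P(1)$ uniformly also places each $\bar X_{kl}/\rho$ inside $\sM$, which legitimizes the expansion. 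This is the hard part, since there are $e^{O(n)}$ label pairs while only the moment hypotheses (A\ref{A:avg-var}) and (A\ref{A:matrix-lindeberg}) are available, not sub-exponential tails. I would truncate $Z_{ij}$ at the Lindeberg level $b_n = \varepsilon' \rho \sqrt{mn}$: the tail pieces $Z_{ij}\,\I(\abs{Z_{ij}} > b_n)$ contribute, \emph{simultaneously for every partition}, at most $\bigl(\sum_{ij} \abs{Z_{ij}}\,\I(\abs{Z_{ij}} > b_n)\bigr)^{2}$, because each entry lies in exactly one block, and this is $o_P(m^2 n^2 \rho^2)$ by (A\ref{A:matrix-lindeberg}); the bounded pieces are handled by a Bernstein bound for each fixed partition followed by a union bound, choosing $\varepsilon'$ small enough (depending on $\log K$, $\log L$, $\gamma$, and the constant in (A\ref{A:avg-var})) that the Bernstein exponent, of order $\sqrt{mn} \asymp n$, dominates the $O(n)$ entropy; (A\ref{A:sparseness}) is what makes the variance and large-deviation regimes come out correctly.

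Finally, the conclusion is the usual comparison of near-maximizers. Since $\hat p_a(\vc) \to p_a > \varepsilon$ and $\hat q_b(\vd) \to q_b > \varepsilon$, the true labels lie in $\sJ_\varepsilon$ for $n$ large, so $F(\vhg, \vhh) \ge F(\vc, \vd)$; combining this with the previous two paragraphs,
\[
  L^\ast + o_P(1)\ \le\ \frac{F(\vhg,\vhh)}{mn}\ \le\ A(\vhg,\vhh) + o_P(1)\ \le\ L^\ast - \tfrac{\beta}{2}\,\Phi_\rho(\vhg,\vhh) + o_P(1),
\]
so $\Phi_\rho(\vhg,\vhh) \to 0$, hence $\Phi_0(\mC(\vhg), \mD(\vhh)) \to 0$, in probability. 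By the lower bound on $\Phi_0$ away from the permuted-diagonal pairs, every limit point of $(\mC(\vhg), \mD(\vhh))$ is a permutation of a diagonal matrix, which is exactly the statement that the proportions of mislabeled rows and columns converge to zero in probability. The only genuinely difficult point is the uniform concentration in the third paragraph.
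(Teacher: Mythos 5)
Your argument is correct and is essentially the paper's: uniform control of the centered block averages over all partitions in $\sJ_\varepsilon$ by truncating at the Lindeberg level, applying Bernstein's inequality, and taking a union bound whose $O(n)$ entropy is beaten by choosing the truncation constant small (Lemmas~\ref{L:refined-bernstein} and~\ref{L:mean-resid-conv}); replacement of $F$ by a deterministic population criterion (Lemma~\ref{L:pop-crit}); and a strict-convexity-plus-identifiability argument showing that criterion strictly drops whenever a fitted label mixes two true classes (Lemmas~\ref{L:uniqueness} and~\ref{L:pop-perturb}), finished by the usual comparison at a near-maximizer. The only deviations are minor: you certify the separation step softly, via the Jensen defect $\Phi_0$ and compactness, where the paper proves the explicit bound $-\kappa\eta^2\delta$ (later reused for Theorem~\ref{T:finite-sample}), and your Bernstein step should recenter the truncated variables---the resulting mean shift is controlled by the same Lindeberg quantity, exactly as in Lemma~\ref{L:refined-bernstein}.
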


Our focus is on the cluster assignments, but, using the methods involved to
prove Theorem~\ref{T:consistency}, it is possible to show that when the
assumptions of Theorem~\ref{T:consistency} are satisfied, then the scaled
estimate of the mean, $\rho^{-1} \bar X_{kl}$ converges in probability to the
population quantity. (This follows from
Theorem~\ref{T:consistency} and Lemma~\ref{L:mean-resid-conv}.)

Under stronger distributional assumptions, we can use the methods of the proof
to establish finite-sample results.  For example, if we assume that the
elements of $\mX$ are Gaussian, then the following result holds.

\begin{theorem}\label{T:finite-sample}
Fix any $\varepsilon>0$.  Let $(\hat\vg, \hat\vh)$ satisfy $F(\hat\vg, \hat\vh) = \max_{\sJ_\varepsilon} F(\vg,\vh)$.
If the elements of $\mX$ are independent Gaussian random variables with
constant variance $\sigma^2$ and conditions~\ref{A:identifiable}--\ref{A:deriv-bound} hold, then for any $0<\delta < \min\Big\{1,\frac{8 c \sigma \max\{K^2,L^2\}}{\tau \varepsilon^2}\Big\}$,
\[
\Pr \Big (
    \big(\mC(\vhg), \mD(\vhh)\big) \notin \sP_\delta \cap \sQ_\delta
    \Big)
    \leq
    2K^{m+1}L^{n+1}
            \exp \Big \{
                - \frac{T_n \tau^2 \varepsilon^4 \delta^2}
                    {256 c^2 \sigma^2 \max\{K^4, L^4\}}
            \Big \},
\]
where $c = \sup |f'(\mu)|$ for $\mu$ in $\sM$,
\[
T_n = \inf_{k,l}\Big\{\sum_i \sum_j \I(g_i=k,h_j=l) | (\vg,\vh) \in \sJ_\varepsilon\Big\}.
\]
\end{theorem}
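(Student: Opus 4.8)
The plan is to follow the same two-part template as the proof of Theorem~\ref{T:consistency}, but with the asymptotic (in-probability) steps replaced by explicit Gaussian tail bounds, which is what turns an $o_P(1)$ statement into a finite-sample one. Write $N_{kl}(\vg,\vh) = \sum_{ij}\I(g_i = k, h_j = l) = mn\,\hat p_k(\vg)\,\hat q_l(\vh)$ for the size of block $(k,l)$, and introduce the deterministic \emph{population criterion} obtained by substituting each block average by its conditional mean:
\[
  \bar F(\vg, \vh) = mn \sum_{kl} \hat p_k(\vg)\, \hat q_l(\vh)\, f(\bar\mu_{kl}),
  \qquad
  \bar\mu_{kl} = \frac{1}{N_{kl}}\sum_{ij}\mu_{c_i d_j}\,\I(g_i = k, h_j = l)
\]
(I take $\rho = 1$, which costs nothing in the fixed-$\sigma^2$ Gaussian setting). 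One half of the argument is a deterministic \emph{separation} estimate for $\bar F$; the other is a \emph{uniform concentration} estimate for $F - \bar F$; the two are combined by the usual M-estimation comparison.

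For the concentration half, note that $\sum_{kl}\hat p_k\hat q_l = 1$, so that on the event where every block average lies in $\sM$ the mean value theorem and $c = \sup_{\mu\in\sM}|f'(\mu)|$ give $|F(\vg,\vh) - \bar F(\vg,\vh)| \le mn\, c\, \max_{kl}|\bar X_{kl} - \bar\mu_{kl}|$. Conditionally on $(\vc,\vd)$, $\bar X_{kl} - \bar\mu_{kl}$ is a centered Gaussian with variance $\sigma^2/N_{kl} \le \sigma^2/T_n$, so $\Pr(|\bar X_{kl} - \bar\mu_{kl}| > s) \le 2\exp\{-T_n s^2/(2\sigma^2)\}$. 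Taking a union bound over the at most $KL$ blocks, over the at most $K^m L^n$ label pairs in $\sJ_\varepsilon$, and over the two tails produces a bound of the form
\[
  \Pr\!\Big( \sup_{(\vg,\vh)\in\sJ_\varepsilon}|F(\vg,\vh) - \bar F(\vg,\vh)| > t \Big)
  \le 2\,K^{m+1}L^{n+1}\exp\!\Big\{ -\frac{T_n\, t^2}{2\, c^2 \sigma^2\, m^2 n^2}\Big\}.
\]

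For the separation half, let $(\vg^\ast, \vh^\ast)$ maximize $\bar F$ over $\sJ_\varepsilon$ and let $\sP_\delta$, $\sQ_\delta$ be the neighborhoods of $\mC(\vg^\ast)$, $\mD(\vh^\ast)$ from~\eqref{E:confusion-nbhd}. Each $\bar\mu_{kl}$ is a convex combination $\sum_{ab} w^{kl}_{ab}\,\mu_{ab}$ of the true cell means with weights $w^{kl}_{ab} = C_{ak}(\vg)\,D_{bl}(\vh)/(\hat p_k \hat q_l)$ read off from the confusion matrices. Since assumptions~(A\ref{A:identifiable})--(A\ref{A:deriv-bound}) give $f$ locally $\tau$-strongly convex with $\tau = \inf_{\mu\in\sM}f''(\mu) > 0$ and the rows and columns of $\mM$ separated, averaging the strong-convexity inequality against $w^{kl}$ (the linear term cancels since $\sum_{ab}w^{kl}_{ab}\mu_{ab} = \bar\mu_{kl}$) gives the quantitative Jensen gap $\sum_{ab}w^{kl}_{ab}f(\mu_{ab}) - f(\bar\mu_{kl}) \ge \tfrac{\tau}{2}\sum_{ab}w^{kl}_{ab}(\mu_{ab} - \bar\mu_{kl})^2$. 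Summing over $kl$ and using $N_{kl}w^{kl}_{ab} = mn\,C_{ak}D_{bl}$ shows that $\bar F(\vg^\ast,\vh^\ast) - \bar F(\vg,\vh)$ is at least $\tfrac{\tau}{2}mn$ times a dimensionless weighted within-block variance of $\mM$; because distinct rows and columns of $\mM$ are separated and $\hat p_k, \hat q_l > \varepsilon$, that variance is at least of order $\varepsilon^2\delta/\min\{K^2,L^2\}$ whenever $(\mC(\vg),\mD(\vh)) \notin \sP_\delta\cap\sQ_\delta$, so the gap is at least $\kappa\delta$ with $\kappa$ of order $mn\,\tau\,\varepsilon^2/\min\{K^2,L^2\}$.

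Finally, combine. On $\{\sup_{\sJ_\varepsilon}|F - \bar F| \le t\}$, feasibility of $(\vg^\ast,\vh^\ast)$ and optimality of $(\vhg,\vhh)$ for $F$ give
\[
  \bar F(\vhg,\vhh) \ge F(\vhg,\vhh) - t \ge F(\vg^\ast,\vh^\ast) - t \ge \bar F(\vg^\ast,\vh^\ast) - 2t,
\]
so $(\mC(\vhg),\mD(\vhh)) \notin \sP_\delta\cap\sQ_\delta$ forces $\kappa\delta \le 2t$. Taking $t = \kappa\delta/2$ and substituting into the concentration bound makes the $m^2 n^2$ factors cancel against $\kappa^2$, leaving an exponent of order $T_n\,\tau^2\,\varepsilon^4\,\delta^2/(c^2\sigma^2\min\{K^4,L^4\})$; keeping track of the absolute constants gives the stated $256$. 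The hypothesis $\delta < \min\{1,\, 8c\sigma\min\{K^2,L^2\}/(\tau\varepsilon^2)\}$ is exactly the range in which this choice of $t$ keeps $s = t/(mnc)$ small enough that all block averages remain inside $\sM$ (so the Lipschitz and strong-convexity bounds apply) and in which the exponent is non-trivial. I expect the separation step to be the main obstacle: converting ``the confusion matrices are $\delta$-far from optimal'' into a lower bound of the stated order on the weighted within-block variance of $\mM$ requires keeping track of the combinatorics of $\mC(\vg)$ and $\mD(\vh)$ and using identifiability~(A\ref{A:identifiable}) to turn row- and column-mixing into genuine variance, and it is there that the factors $\tau$, $\varepsilon$ and $T_n$ of the final bound are produced.
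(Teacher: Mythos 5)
Your plan is essentially the paper's own proof: a uniform bound on the deviation of $F$ from its population version obtained by a tail bound for each block average plus a union bound over the at most $K^mL^n\cdot KL$ blocks (which is where $2K^{m+1}L^{n+1}$ and the minimum block size $T_n$ come from), a quantitative Jensen-gap separation for the population criterion, and the $2r_n$ M-estimation comparison; the paper compares directly with $(\vc,\vd)$ rather than routing through a population maximizer $(\vg^\ast,\vh^\ast)$, but since the row sums of $\mC(\vg)$ and $\mD(\vh)$ do not depend on the labels this is immaterial. Three bookkeeping points where your account diverges from the paper's. First, $\tau$ in the statement is not $\inf_{\sM} f''$: it is the identifiability constant, $\min(\mu_{ab}-\mu_{a'b})^2$ and $\min(\mu_{ab}-\mu_{ab'})^2$ over distinct rows and columns (Lemma~\ref{fs:max-truelabels}), while for the Gaussian rate function $f''\equiv 1$; the constant hidden in your ``at least of order $\varepsilon^2\delta/\min\{K^2,L^2\}$'' step is exactly this $\tau$, and the extraction is the paper's Lemma~\ref{L:uniqueness} argument (take $a$ with $\mu_{ab}\neq\mu_{ab'}$, use the largest entry $C_{ak}\geq\varepsilon/K$, two-point Jensen gap), giving the gap $\tau\varepsilon^2\delta/(4\min\{K^2,L^2\})$ — so the separation step you flag as the main obstacle is already done in the paper for Theorem~\ref{T:consistency} and is reused here with $\varepsilon$ in place of the random $\eta$. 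Second, the hypothesis $\delta<8c\sigma\min\{K^2,L^2\}/(\tau\varepsilon^2)$ is not about keeping block averages inside $\sM$ (neither your sketch nor the paper handles that containment explicitly); it is there because the paper bounds $\sup\|\mR\|_\infty$ via Bernstein's inequality (Lemma~\ref{fs:conc-bound}), and the condition places the relevant threshold $t/c$ below $\sigma$ so that the sub-Gaussian branch applies, which is what produces the $256$. Third, with your exact Gaussian tail the constant would be $128$ rather than $256$ and the $\delta$ upper bound would not be needed at all, so the stated bound follows a fortiori; these are harmless deviations, not gaps.
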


The proof of this finite-sample result follows the same outline as the
asymptotic result; Appendix~\ref{S:finite-sample} gives
details.

\section{Approximation algorithm}\label{S:alg}

Proposition~\ref{P:consistent} shows that the maximizer of the profile
log-likelihood $F$ will give a good estimate of the true clusters.
Unfortunately, finding this maximizer is an NP-hard problem \citep{tanay02}.
Maximizing $F$ is a combinatorial optimization problem with an
exponentially-large state space.  To get around this, we will settle for
finding a local optimum rather than a global one.  We present an algorithm for
finding a local optimum that, in practice, has good estimation performance.

Our approach is based on the Kernighan-Lin heuristic \citep{kernighan70},
which \citet{newman06} used for a related problem, network community
detection.  After inputting initial partitions for the rows and columns, we
iteratively update the cluster assignments in a greedy manner.  The
algorithm works as follows:

\begin{enumerate}

\item Initialize the row and column labels $\vg$ and $\vh$ arbitrarily, and compute $F$.

\item \label{item:alg-loop} Repeat until no local improvement in the profile likelihood is found:

\begin{enumerate}

\item \label{item:alg-step1-a} For each row $i$, determine which of the $K$
  possible label assignments
  for this row is optimal, keeping all other row and column labels fixed. Do not
  perform this assignment, but record the optimal label and the local
  improvement to $F$ that would result if this assignment were to be made.

\item \label{item:alg-step1-b} For each column $j$, determine which of the $L$
  possible label assignments for this column is optimal, keeping all other row
  and column labels fixed. As in step~\ref{item:alg-step1-a}, do not perform
  this assignment, but record the optimal label and the local improvement to
  $F$ that would result if this assignment were to be made.

\item \label{item:alg-step2} In order of the local improvements recorded in
  steps~\ref{item:alg-step1-a} and~\ref{item:alg-step1-b}, sequentially perform the individual cluster reassignments 
determined in these steps, and record the profile likelihood after each
reassignment. Note that these assignments are no longer locally
optimal since the labels of many of the rows and columns change during this
step.  Thus, the profile likelihood could increase or decrease as we move sequentially
through the assignments.

%\item Out of those labels considered in step~\ref{item:alg-step2}, we choose
%the one that has the highest profile likelihood.

\item Out of the sequence of cluster assignments considered in step~\ref{item:alg-step2},
  choose the one that has the highest profile likelihood.

\end{enumerate}

\end{enumerate}
%\end{algorithm}

\noindent
At each complete iteration, a finite sequence of values is considered and we select
the cluster assignment with the highest profile likelihood.  
This allows for efficient consideration of multiple cluster reassignments in each iteration 
of the algorithm as opposed to re-optimizing after each individual row or column reassignment.
The criterion function
increases at each complete iteration and we stop when no local improvement in the profile
likelihood is found.  Thus, the algorithm will converge to a local optimum.

There is no guarantee that the local optimum found by the algorithm will be
the global optimum of the objective function $F$. To mitigate this deficiency,
we will run the algorithm repeatedly with many different random
initializations for $\vg$ and $\vh$.  Each initialization can give rise to a
different local optimum. We choose the cluster assignment with the highest
value of $F$ among all local optima found by the procedure. As we increase the
number of random initializations, the probability that the global optimum will
be in this set will increase.  We found that 100--250 initializations seem to suffice in the simulations and data examples.  Appendix~\ref{app-empirical} provides additional results on the stability of the algorithm.

%As noted by an anonymous referee, an alternative to our algorithm is to use tabu search to find a local optimum.  However, our algorithm is less computationally expensive and, in practice, we found that our algorithm with multiple
%random starting values outperforms tabu search.

The main computational bottleneck is updating the value of $F(\vg, \vh)$ as we
update the labels $\vg$ and $\vh$.  We can do this efficiently by storing
and incrementally updating the cluster proportions $\vhp$ and $\vhq$,
the within-cluster row and column sums
\[
  R_{il} = \sum_{j = 1}^{n} X_{ij} \I(h_j = l)
\qquad\text{ and }\qquad
  C_{kj} = \sum_{i = 1}^{m} X_{ij} \I(g_i = k),
\]
and the block sums
\(
  B_{kl} = \sum_{i=1}^{m} R_{il} \I(g_i = k).
\)
Given the values of these quantities, we can compute the criterion $F(\vg,
\vh)$ with $\Oh(K L)$ operations.

If we reassign the label of row $i$ from $k$ to $k'$, then it is
straightforward to update $\vhp$ with $\Oh(1)$ operations.  The values of the
within-cluster row sums $R_{il}$ remain unchanged.   The new values of the
block sums are
\(
  B_{kl}' = B_{kl} - R_{il}
\)
and
\(
  B_{k'l}' = B_{k'l} + R_{il}
\)
for $l = 1, \dotsc, L$; the other block sums are unchanged.  The expensive
part of the update is recomputing the within-cluster column sums $C_{kj}$ for
row labels $k$ and $k'$ and each column $j$.  These computations require $\Oh(m n)$
operations if $\mX$ is dense, and $\Oh(N)$ operations if $\mX$ is sparse with
at most $N$ nonzero elements and $N \geq \max\{ m, n \}$.

Overall we must perform $\Oh(N + KL)$ operations to
reassign the label of row $i$.  Reassigning the label of column $j$ has the
same computational cost.  Thus, one loop iteration in step~\ref{item:alg-loop}
requires $\Oh\big((m + n) (N + KL)\big)$ operations.  For dense data, one
iteration requires $\Oh\big((m + n)(m n + KL)\big)$ operations.  We do not
have an upper bound on the number of iterations until the algorithm converges,
but in our experiments we found that empirically, 25 to 30 iterations suffice.
These iteration counts may seem small, but in fact each iteration performs $m$
row label assignments and $n$ column label assignments. The convergence here
is not a result of early stopping---we found that after 25--30 iterations, no
possible local improvement was possible.

For comparison, a spectral-based biclustering algorithm requires the top
singular vectors of the data matrix, which can be calculated in roughly $\Oh(m n)$
operations using Lanczos or another indirect method \citep{golub96}.

\section{Empirical evaluation}\label{S:simulations}

Here we evaluate the performance of the profile likelihood based biclustering
algorithm from Section~\ref{S:alg}.  We simulate data from a variety of
regimes, including sparse binary data and dense heavy-tailed
continuous measurements.  In these settings, we employ the following three
relative entropy functions:
\begin{subequations}
\begin{align}
  f_{\text{Bernoulli}}(\mu) &= \mu \log \mu + (1 - \mu) \log (1 - \mu),
  \label{E:f-bernoulli} \\
  f_{\text{Poisson}}(\mu) &= \mu \log \mu - \mu, \label{E:f-poisson} \\
  f_{\text{Gaussian}}(\mu) &= \mu^2 / 2 \label{E:f-gaussian}.
\end{align}
\end{subequations}
We evaluate performance both when the profile likelihood is correctly
specified and when the relative entropy function does not match the data
distribution.

In our simulations, we report the proportion of misclassified rows and columns
by the profile likelihood based method (PL).  We initialize partitions randomly, 
and then run the improvement algorithm from Section~\ref{S:alg} until convergence.
We use multiple random starting values to minimize the possibility of finding
a non-optimal stationary point.  Our code for implementing the profile likelihood based
biclustering method is available on GitHub (https://github.com/patperry/biclustpl).

We compare our method to three other biclustering algorithms.  The first
algorithm is a spectral biclustering algorithm, DI-SIM, motivated by a block
model similar to ours \citep{rohe12}.  The algorithm finds the singular value
decomposition of the data matrix $\mX$, and then applies $k$-means clustering
to the top left and right singular vector loadings.  
The second algorithm, Sparse Biclustering (SBC) \citep{tan2014}, assumes the data follows a normal distribution and maximizes the $\ell_1$-penalized log likelihood.  The $\ell_1$-penalty allows for sparse biclusters, where the amount of sparsity is controlled by a regularization parameter $\lambda \geq 0$.  We implement the method using the R 
\texttt{sparseBC} package.  In our simulations, we evaluate the performance of SBC over a grid of $\lambda$ values and report the best case performance of the method.  
Lastly, we compare against KM \citep{macqueen67}, which ignores the interactions between the clusters and
applies $k$-means separately to the rows and columns of $\mX$.

In our first simulation, we generate sparse Poisson count data from a block
model with $K = 2$ row clusters and $L = 3$ column clusters.  We vary the
number of columns, $n$, between 200 to 1400 and we take the number of rows to
be $m = \gamma  n$ for $\gamma \in \{ 0.5,1,2 \}$.  To assign the true row and
column classes $\vc$ and $\vd$, we sample independent multinomials with
probabilities $\vp = (0.3, 0.7)$ and $\vq = (0.2, 0.3, 0.5)$.  We choose the
matrix of block parameters to be
\[
  \mM = [\mu_{ab} ]
  =
  \frac{b}{\sqrt{n}}
  \begin{pmatrix}
    0.92 & 0.77 & 1.66  \\
    0.17 & 1.41 & 1.45
  \end{pmatrix},
\]
where $b$ is chosen between 5 and 20; the entries of the matrix were chosen
randomly, uniformly on the interval $[0,2]$.   We chose the $1/\sqrt{n}$
scaling so that the data matrix would be sparse, with $\Oh(\sqrt{n})$ elements
in each row.  We generate the data conditional on the row and
column classes as
\(
  X_{ij} \mid \vc, \vd \sim \mathrm{Poisson}(\mu_{c_i d_j}).
\)
We run all three methods with $250$ random starting values.

\begin{figure}
\centering
\includegraphics[scale = 1, trim = 0mm 20mm 0mm 0mm]{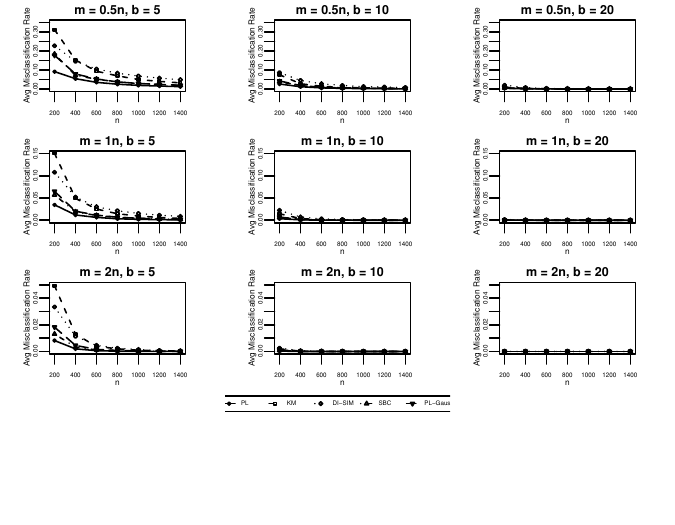}
\caption{Average misclassification rates for Poisson example over 100
simulations.}
\label{poissonSim}
\end{figure}

Figure~\ref{poissonSim} presents the average bicluster misclassification
rates for each sample size and Table~\ref{tabPois1} reports the standard
deviations.  
In all of the scenarios considered, biclustering based on the poisson
profile likelihood criterion performs at least as well as the other methods.
When the relative entropy function is misspecified (using PL-Gaus instead
of PL-Pois), biclustering based on the profile likelihood criterion performs
at least as well as DI-SIM and KM, but SBC has a lower average misclassification
rate in some scenarios when $n$ is small.  As $n$ increases, SBC tends to
select a regularization parameter close to zero, and the two methods perform similarly.
Moreover by looking at the standard deviations, we see that for
the PL methods, the misclassification rate seems to be converging to zero as
we increase $n$.

\begin{table}[H]%\large
\centering
\scalebox{0.6}{
\begin{tabular}{|l|ccc|ccc|ccc|ccc|ccc|}
\hline
\multicolumn{16}{|c|}{$m=0.5n$}\\
\hline
& \multicolumn{3}{c}{PL-Pois} & \multicolumn{3}{|c|}{PL-Norm} &	\multicolumn{3}{|c|}{KM} & \multicolumn{3}{|c|}{DS} & \multicolumn{3}{|c|}{SBC} \\
\hline
n    & b=5  & b=10 & b=20 & b=5  & b=10 & b=20 & b=5   & b=10 & b=20 & b=5   & b=10  & b=20  & b=5   & b=10  & b=20 \\
200	&	0.0356	&	0.0147	&	0.0047	&	0.0878	&	0.0239	&	0.0070	&	0.0957	&	0.0438	&	0.0097	&	0.0403	&	0.0286	&	0.0114	&	0.0935	&	0.0204	 &	0.0063	\\
400	&	0.0182	&	0.0064	&	0.0013	&	0.0313	&	0.0091	&	0.0025	&	0.0478	&	0.0145	&	0.0033	&	0.0279	&	0.0144	&	0.0043	&	0.0238	&	0.0089	 &	0.0025	\\
600	&	0.0103	&	0.0034	&	0.0004	&	0.0153	&	0.0045	&	0.0011	&	0.0253	&	0.0068	&	0.0013	&	0.0186	&	0.0083	&	0.0026	&	0.0138	&	0.0043	 &	0.0011	\\
800	&	0.0076	&	0.0024	&	0.0003	&	0.0104	&	0.0039	&	0.0006	&	0.0188	&	0.0050	&	0.0009	&	0.0140	&	0.0069	&	0.0013	&	0.0102	&	0.0040	 &	0.0006	\\
1000	&	0.0054	&	0.0018	&	0.0003	&	0.0076	&	0.0029	&	0.0003	&	0.0122	&	0.0033	&	0.0004	&	0.0111	&	0.0040	&	0.0008	&	0.0076	&	0.0029	 &	0.0003	\\
1200	&	0.0041	&	0.0010	&	0.0001	&	0.0061	&	0.0017	&	0.0003	&	0.0092	&	0.0021	&	0.0003	&	0.0092	&	0.0034	&	0.0007	&	0.0062	&	0.0017	 &	0.0003	\\
1400	&	0.0036	&	0.0009	&	0.0001	&	0.0047	&	0.0014	&	0.0001	&	0.0077	&	0.0016	&	0.0001	&	0.0071	&	0.0023	&	0.0005	&	0.0047	&	0.0014	 &	0.0001	\\
\hline
\multicolumn{16}{c}{}\\
\hline
\multicolumn{16}{|c|}{$m=n$}\\
\hline
n    & b=5  & b=10 & b=20 & b=5  & b=10 & b=20 & b=5   & b=10 & b=20 & b=5   & b=10  & b=20  & b=5   & b=10  & b=20 \\
200	&	0.0160	&	0.0039	&	0.0009	&	0.0415	&	0.0069	&	0.0010	&	0.0663	&	0.0119	&	0.0019	&	0.0339	&	0.0128	&	0.0030	&	0.0345	&	0.0061 	&	0.0010	\\
400	&	0.0071	&	0.0017	&	0.0000	&	0.0101	&	0.0021	&	0.0004	&	0.0183	&	0.0034	&	0.0004	&	0.0152	&	0.0049	&	0.0009	&	0.0092	&	0.0021	 &	0.0004	\\
600	&	0.0036	&	0.0005	&	0.0000	&	0.0051	&	0.0008	&	0.0000	&	0.0100	&	0.0014	&	0.0000	&	0.0088	&	0.0024	&	0.0002	&	0.0050	&	0.0008	 &	0.0000	\\
800	&	0.0021	&	0.0003	&	0.0000	&	0.0035	&	0.0005	&	0.0000	&	0.0056	&	0.0009	&	0.0000	&	0.0064	&	0.0014	&	0.0001	&	0.0034	&	0.0005	 &	0.0000	\\
1000	&	0.0017	&	0.0002	&	0.0000	&	0.0026	&	0.0004	&	0.0000	&	0.0043	&	0.0006	&	0.0000	&	0.0050	&	0.0009	&	0.0000	&	0.0026	&	0.0003 	&	0.0000	\\
1200	&	0.0012	&	0.0001	&	0.0000	&	0.0020	&	0.0002	&	0.0000	&	0.0030	&	0.0004	&	0.0000	&	0.0036	&	0.0008	&	0.0000	&	0.0020	&	0.0002	 &	0.0000	\\
1400	&	0.0007	&	0.0001	&	0.0000	&	0.0012	&	0.0001	&	0.0000	&	0.0018	&	0.0002	&	0.0000	&	0.0025	&	0.0005	&	0.0000	&	0.0012	&	0.0001	 &	0.0000	\\
\hline
\multicolumn{13}{c}{}\\
\hline
\multicolumn{13}{|c|}{$m=2n$}\\
\hline
n    & b=5  & b=10 & b=20 & b=5   & b=10 & b=20 & b=5  & b=10 & b=20 & b=5   & b=10 & b=20  & b=5   & b=10  & b=20 \\
200	&	0.0059	&	0.0010	&	0.0000	&	0.0118	&	0.0018	&	0.0000	&	0.0204	&	0.0027	&	0.0000	&	0.0141	&	0.0039	&	0.0000	&	0.0096	&	0.0017	&	0.0000	\\
400	&	0.0022	&	0.0001	&	0.0000	&	0.0033	&	0.0003	&	0.0000	&	0.0071	&	0.0005	&	0.0000	&	0.0056	&	0.0010	&	0.0000	&	0.0028	&	0.0003	&	0.0000	\\
600	&	0.0010	&	0.0000	&	0.0000	&	0.0014	&	0.0001	&	0.0000	&	0.0024	&	0.0002	&	0.0000	&	0.0025	&	0.0004	&	0.0000	&	0.0011	&	0.0001	&	0.0000	\\
800	&	0.0004	&	0.0000	&	0.0000	&	0.0008	&	0.0000	&	0.0000	&	0.0015	&	0.0000	&	0.0000	&	0.0017	&	0.0001	&	0.0000	&	0.0007	&	0.0000	&	0.0000	\\
1000	&	0.0002	&	0.0000	&	0.0000	&	0.0005	&	0.0000	&	0.0000	&	0.0009	&	0.0000	&	0.0000	&	0.0013	&	0.0000	&	0.0000	&	0.0004	&	0.0000	&	0.0000	\\
1200	&	0.0002	&	0.0000	&	0.0000	&	0.0003	&	0.0001	&	0.0000	&	0.0006	&	0.0001	&	0.0000	&	0.0009	&	0.0001	&	0.0000	&	0.0003	&	0.0001	&	0.0000	\\
1400	&	0.0000	&	0.0000	&	0.0000	&	0.0002	&	0.0000	&	0.0000	&	0.0005	&	0.0000	&	0.0000	&	0.0006	&	0.0000	&	0.0000	&	0.0002	&	0.0000	&	0.0000	\\
\hline
\end{tabular}}

\caption{Standard deviations for Poisson example over 100 simulations
}\label{tabPois1}
\end{table}

Appendix~\ref{app-empirical} describes in detail the simulations for Bernoulli,
Gaussian, and heavy-tailed $t$ data.  These
results are similar to the Poisson case.  Our method performs at least as well
as the other procedures in all cases and shows signs of convergence.
%For the Gaussian and $t$-distributed data, the DI-SIM algorithm is much less
%competitive, and our algorithm beats it in all cases.

%
%The results in \citet{McSherry01} and \citet{rohe12} show that under certain
%asymptotic regimes, the biclusters found by the DI-SIM algorithm are
%consistent estimates of the truth, but we do not observe this behavior in our
%simulations.  It is possible that this is because the consistency results
%require strong assumptions that do not hold in our simulations.  Specifically,
%McSherry requires that the variance of the entries in $\mM$ is much larger
%than $\log^6(n)/n$ and Rohe and~Yu require that the minimum expected degree
%eventually exceeds $\sqrt{2}n/\sqrt{\log(n)}$.  In our simulations, either
%the convergence of the DI-SIM algorithm is slow or the DI-SIM consistency
%results do not apply.

Overall, the simulations confirm the conclusions of
Proposition~\ref{P:consistent}, and they show that our approximate
maximization algorithm performs well.  These results give us confidence that
profile likelihood based biclustering can be used in practice.

%\FloatBarrier

\section{Applications}\label{S:applications}

In this section we use profile-likelihood-based biclustering to reveal
structure in two high-dimensional datasets.  For each example, we maximize the
profile log-likelihood using the algorithm described in
Section~\ref{S:alg}.  An additional application example is provided in Appendix~\ref{app-empirical}.

In all three examples, we compare the biclusters selected using the profile-likelihood 
based procedure to those selected by DI-SIM, SBC and KM.
For SBC, we compare performance using the $K$ and $L$ selected for PL and 
the $K$ and $L$ selected using the function provided in the R package.
We also compare the biclusters selected by PL to those selected by
the Convex Biclustering Algorithm (CBC) \citep{chi2017}.  
CBC is a biclustering procedure designed for continuous data.  The algorithm is based on a convex 
criterion function and computes a biclustering solution
path for varying values of a tuning parameter, $\gamma$.  We implement CBC
using the R package \texttt{cvxbiclustr}.  The selected $K$ and $L$ are not specified explicitly,
and are instead dependent on $\gamma$.  In our experiments, we select $\gamma$ using the 
function provided in the R package.  All other parameters are set to the default package settings.

%In all three examples, we compare the biclusters selected using the profile-likelihood 
%based procedure to those selected by DI-SIM, SBC and KM as well as
%the Convex Biclustering Algorithm (CBC) \citep{}
%and the Dynamic Tree Cutting Algorithm (DTC) \citep{}.  We implement these
%two additional methods using the R packages \texttt{cvxbiclustr} and \textt{dynamicTreeCut},
%respectively.  CBC and DTC select $K$ and $L$ automatically.  For SBC, we compare 
%performance using the same $K$ and $L$ as for PL and 
%also using $K$ and $L $ selected using the function provided in the R package.
%For CBC, we tuned the regularization parameter $\gamma$ using the function provided
%in the R package.  All other parameters were set equal to the default settings.

\subsection{GovTrack}

In our first application of the proposed method, we cluster legislators and
motions based on the roll-call votes from the 113th United States House of
Representatives (years 2013--2014). We validate our method by showing that the
clusters found by the method agree with the political parties of the
legislators.

After downloading the roll-call votes from \url{govtrack.org}, we form a data
matrix $X$ with rows corresponding to the 444 legislators who voted in the
House of Representatives, and columns corresponding to the 545 motions voted
upon. Even though there are only 435 seats in the House of Representatives, 9
legislators were replaced mid-session when they resigned or died.
We code the non-missing votes as
\[
  X_{ij} =
  \begin{cases}
    1 &\text{if legislator $i$ voted ``Yea'' on motion $j$,} \\
    0 &\text{if legislator $i$ voted ``Nay'' on motion $j$,} \\
    \text{NA} & \text{if legislator $i$ did not vote on motion $j$}.
  \end{cases}
\]
Not all legislators vote on all motions, and 7\% of the data matrix entries
are missing, coded as $\text{NA}$.  If we assume that the missing data
mechanism is ignorable, then it is straightforward to extend our fitting
procedure to handle incomplete data matrices. Specifically, to handle the
missing data, we replace sums over all matrix entries with sums over all
non-missing matrix entries.

To choose the number of row clusters, $K$, and the number of column clusters,
$L$, we fit all $100$ candidate models with $K$ and $L$ each ranging between
$1$ and $10$. Figure~\ref{screeGovTrack} shows the deviance (twice the
negative log-likelihood) plotted as a function of $K$ and $L$. The left
``scree'' plot shows that for most values of $K$, increasing $L$ from $1$ to
$4$ has a large effect of the deviance, but increases $L$ from $4$ to a larger
value has only a minor effect. Similarly, the right plot shows that increasing
$K$ from $1$ to $2$ causes a large drop in the deviance, but increasing $K$ to
a larger value has only a minor effect. Together, these plot suggest that we
should pick $K = 2$ and $L = 4$.

\begin{figure}
\centering
 \includegraphics[scale=0.8]{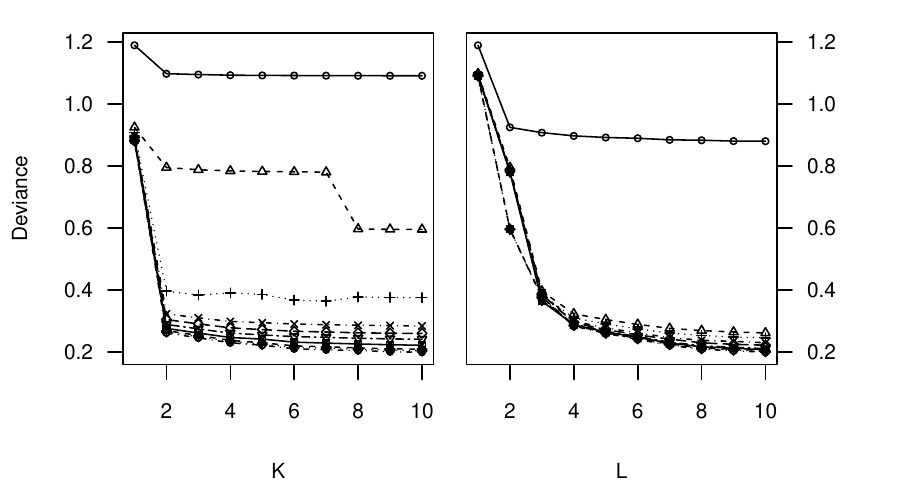}
\caption{GovTrack likelihood for different values of $K$ and $L$}
\label{screeGovTrack}
\end{figure}

% [POP: I wrote the following about the BIC, but given the lack of theoretical
%  support here, it's probably best if we leave this out.]
%
% These particular choices for $K$ and $L$ agree with the values selected by an
% ad-hoc Bayesian Information Criterion BIC:
% \[
%   \mathrm{BIC} = -2 (\text{log-likelihood}) + \mathrm{df} \cdot \log N,
% \]
% where with the ``sample size'', $N$, is equal to the
% number of non-missing entries in the data matrix, and the ``degrees of
% freedom'' is specified as $\mathrm{df} = m (K - 1) + n (L - 1) + K L$.
% The intuition for the degrees of freedom is that if we encode the row and
% column labels, $\vg$ and $\vh$ using $K-1$ and $L-1$ indicator variables for
% the row and column classes, respectively, then there are $m (K-1)$ entries in
% the matrix for $\vg$, there are $n (L - 1)$ entries in the matrix for $\vh$,
% and there are $K L$ entries in the mean matrix $\mM$.

To guard against the possibility of the bicluster algorithm finding a local
rather than global optimum, we used $100$ random initializations, and chose
the row and column cluster assignments with the highest log-likelihood. To
check the robustness of this assignment, we increased the number of 
random initializations up to $1000$. Even with $10$ times as many restarts, we
still found the same optimal log-likelihood.

Figure~\ref{hmGovTrack} shows a heatmap constructed after biclustering the
data into $2$ row clusters and $4$ column clusters.  The two row-clusters
found by the algorithm completely recover the political parties of the
legislators (every legislator in row cluster $1$ is a Democrat, and every
legislator in row cluster $2$ is a Republican). The fact that we were able to
recover the political party provides us some confidence that the algorithm can
find meaningful clusters in practice. The column clusters reveal four types of
motions:
(1) motions with strong support from both parties;
(2) motions with moderate support from both parties;
(3) motions with strong Democratic support;
(4) motions with strong Republican support.

\begin{figure} %[H]
\centering
\includegraphics[trim = 0mm 15mm 0mm 0mm,scale=0.9]{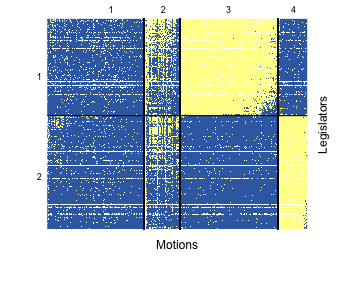}
\caption{Heatmap generated from GovTrack data reflecting the voting patterns
  in the in the different biclusters.
Blue (dark) identifies ``Yea'' votes, yellow (light) identifies ``Nay'' votes,
and white identifies missing votes.}\label{hmGovTrack}
\end{figure}

We compared the clusters found by our method with those found by the four
competing methods. The competing methods do not directly
handle missing data, so for these methods we code ``Yea'' as $+1$, ``Nay`` as
$-1$, and ``Missing'' as $0$. 

Setting $K=2$ and $L=4$, 
DI-SIM returns the same row-clusters, but
classified 50 columns (9\%) differently. KM placed 10 Republicans into the
majority-Democrat cluster; it classified 45 motions (8\%) differently from
profile likelihood.  SBC classified 3 Republicans into the 
majority-Democrat cluster, and had less
agreement with the column-clusters with 137 motions (25.1\%) classified
differently from PL.  

Allowing $K$ and $L$ to vary, SBC and CBC all selected more row 
and column clusters.  SBC selected 9 row clusters, where only one
cluster contained a mix of Republicans and Democrats, and 11 column clusters.
Although the number of clusters differ, the Rand Index between the row clusters
and column clusters found by PL and SBC is 0.746 and 0.653, respectively, suggesting
that similar clusterings are found by the two methods.  On there other hand, CBC 
selected 111 row clusters and 520 column clusters, where many of the
clusters only contained one row/column.  CBC is designed for continuous data
and the default settings do not appear to produce meaningful clusters in this example.  

Overall, the agreement between PL and DI-SIM, KM and SBC gives us confidence that the column clusters are meaningful.

\FloatBarrier

\subsection{AGEMAP}

Biclustering is commonly used for microarray data to visualize the activation
patterns of thousands of genes simultaneously.  It is used to identify patterns
and discover distinguishing properties between genes and individuals.
We use the AGEMAP dataset \citep{zahn07} to illustrate this process.

AGEMAP is a large microarray data set containing the log expression levels for
40 mice across 8,932 genes measured on 16 different tissue types.  For this
analysis, we restrict attention to two tissue types: cerebellum and cerebrum.
The 40 mice in the dataset belong to four age groups, with five males and five
females in each group.  One of the mice is missing data for the cerebrum
tissue so it has been removed from the dataset.

Our goal is to uncover structure in the gene expression matrix. 
We bicluster the 39 $\times$ 17,864 residual matrix
computed from the least squares solution to the multiple linear regression model
\[
  Y_{ij} = \beta_{0j} + \beta_{1j}A_i + \beta_{2j}S_i + \varepsilon_{ij},
\]
where $Y_{ij}$ is the log-activation of gene~$j$ in mouse~$i$,
$A_i$ is the age of mouse~$i$, $S_i$ indicates if mouse~$i$ is male,
$\varepsilon_{ij}$ is a random error, and $(\beta_{0j}, \beta_{1j},
\beta_{2j})$ is a gene-specific coefficient vector. Here, we are biclustering
the residual matrix rather than the raw gene expression matrix because we are
interested in the structure remaining after adjusting for the observed
covariates.

The entries of the residual matrix are not independent (for example, the sum
of each column is zero).  Also, the responses of many genes are likely
correlated with each other.  Thus, the block model assumption required by
Theorem~\ref{T:consistency} is not satisfied, so its conclusion will not hold
unless the dependence between the residuals is negligible.  In light of this
caveat, the example should be considered as exploratory data analysis.

We perform biclustering using profile likelihood based on the Gaussian
criterion~\eqref{E:f-gaussian} with 100 random initializations.
To determine an appropriate number of mice clusters, $K$, and gene clusters, $L$, we experiment
with values of $K$ and $L$ between $1$ and $15$.  Figure~\ref{screeAgeMap} presents the scree plots.
The left plot shows that increasing $L$ beyond $5$ has a relatively small impact on the deviance, and  
similarly, the right plot shows that increasing $K$ beyond $3$ has a relatively minor effect.
This suggests we should set $K = 3$ and $L = 5$.  For this choice of $K$ and $L$, we experimented with using up to 1000 random starting values, but found no change to the resulting log-likelihood.

\begin{figure}[H]
  \includegraphics[scale=0.8]{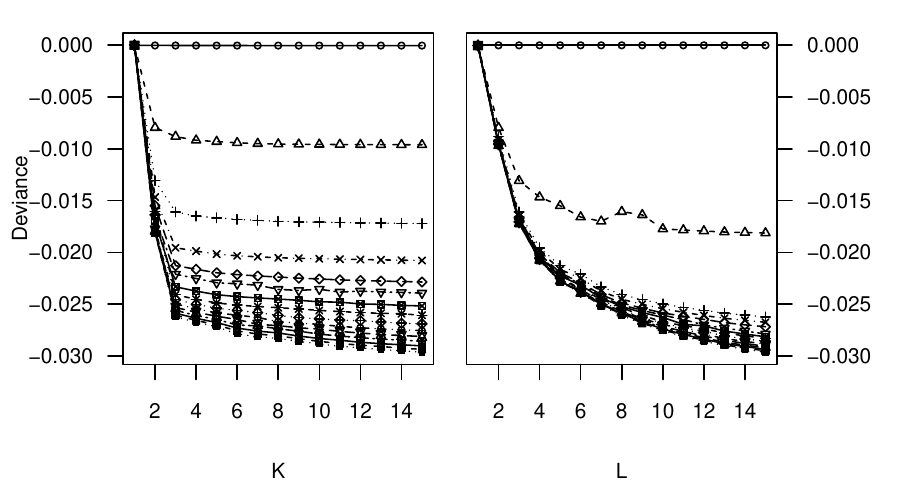}
  \caption{AgeMap likelihood for different values of $K$ and $L$}
  \label{screeAgeMap}
\end{figure}

%Based on the preliminary analysis in \citet{perry10}, it appears that there
%are three mice groups.  
%%To determine an appropriate number of gene groups we
%%experiment with values between two and five.  Using four gene groups appears
%%to give a reasonable representation of the data.
%%The heatmap presented in Figure~\ref{hmAGEMAP} shows the results.
%To determine an appropriate number of gene groups we
%plot the log-likelihood in Figure~\ref{f:agemap-loglik}
%for values between two and fifteen.  Using four gene 
%groups appears to give a reasonable representation of the data.
%The heatmap presented in Figure~\ref{hmAGEMAP} shows the results.

\begin{figure}[ht]\label{hmAGEMAP}
\centering
\includegraphics[trim = 0mm 10mm 0mm 0mm,scale=.9]{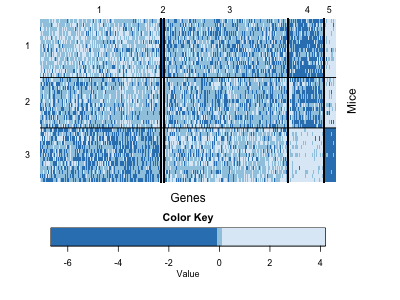}
\caption{Heatmap generated from AGEMAP residual data reflecting the varying
expression patterns in the different biclusters.  The
colors for the matrix entries correspond encode to the first quartile, the middle two quartiles, and the upper quartile.}
\end{figure}

The heatmap presented in Figure~\ref{hmAGEMAP} shows the results.  The expression levels for gene group 1 and 3 appear to be fairly neutral across the three mouse groups, but the other three gene groups have a more visually apparent pattern.  It appears that a mouse can have expression levels for at most two of gene groups 2, 4, and 5.  Mouse group~2 has high expression for gene groups~2~and~4; mouse group~2 has high expression for gene group~4; and mouse group 3 has high expression for gene group~5.

With the exception of CBC, the competing methods found similar gene groups to those found by PL.  
Setting $L = 5$, KM and SBC generally agree with
PL, with 89.5\% and 99.7\% cluster agreement, respectively.  DI-SIM agreed less, but still had 62.5\%
cluster agreement with PL.  Allowing the number of clusters to to vary, SBC still selects a relatively small number of gene groups
with $L = 8$, with a Rand Index with PL of 0.695.  

CBC selected a much larger number of gene groups than any of the competing methods, with $L = 394$. The
gene groups include 1 very large cluster with 16,870 genes and 351 clusters each with only 1 gene.
Based on discussions with the authors, the default selections for CBC may not be appropriate
when the dimensions of the matrix vary by such a large amount.  Adjusting the weights used in the method
could produce more meaningful results, but this goes beyond the scope of our analysis. 
%DTC selects $L = 428$ gene groups, all with varying cluster sizes.  This method appears to be sensitive
%to the specified minimum cluster size.

%We computed the bicluster assignments found by DI-SIM, SBC and KM for $K = 3$ and $L = 5$.  
%DI-SIM and KM failed to converge after 1000 iterations, but the resulting cluster assignments found by KM and the profile likelihood method generally agreed, with 89.6\% cluster agreement.  DI-SIM agreed less, where between DI-SIM and the profile likelihood method the cluster agreement was 62.5\%.  

CBC, DI-SIM, KM and SBC all select the same mice groups as PL, even when $K$ is allowed
to vary for CBC and SBC.  The agreement between all of the methods suggests that the three mice groups
are meaningful.  The three clusters of mice found by the methods also agree with those found by \citet{perry10}.  That
analysis identified the mouse clusters, but could not attribute meaning to them.
The bicluster based analysis has deepened our understanding of the three mouse
clusters while suggesting some interesting interactions between the genes.

%\FloatBarrier

%As outlined in Section~\ref{S:consistency-results}, to prove the main theorem,
%we first establish that in the limit, $F$ is close
%to a nonrandom ``population version,'' $G$.  Then, we establish that $G$ is
%maximized at the true class labels.  Finally, we show that outside any
%neighborhood around the true class labels, $G$ is smaller than at the true
%values.

\section{Discussion}\label{S:conclusion}

We have developed a statistical setting for studying the performance of
biclustering algorithms.  Under the assumption that the data follows a
stochastic block model, we derived sufficient conditions for an algorithm
based on maximizing a profile-likelihood based criterion function to be
consistent.  This is the first theoretical guarantee for any biclustering
algorithm which can be applied to a broad range of data distributions and can
handle both sparse and dense data matrices.

Since maximizing the criterion function exactly is computationally infeasible,
we have proposed an approximate algorithm for obtaining a local optimum rather
than a global one.  We have shown through simulations that the
approximation algorithm has good performance in practice.  Our empirical
comparisons demonstrated that the method performs well in a variety of
situations and can outperform existing procedures.

Applying the profile-likelihood based biclustering algorithm to real data
revealed several interesting findings.  
%Our results from the MovieLens dataset
%identified a distinct difference in movie review behavior and split
%individuals who only rate recent movies from individuals who rate movies from
%all time periods.  
Our results from the GovTrack dataset 
demonstrated our methods ability to recover ground truth labels when available,
and identified motion clusters that were robust across different methods.
Biclustering the genes and mice in the AGEMAP data exposed
an interesting pattern in the expression of certain genes and we found that at
most two gene groups can have high expression levels for any one mouse.  The
consistency theorem proved in this report gives conditions under which we can
have confidence in the robustness of these findings.

% [POP]: if we are going to include this, the negativity needs to be toned
% down.  Don't highlight this as a shortcoming.  Instead, suggest choice of K
% and L as an extension of our methods.
%
%It is important to note that the theoretical results operate under the assumption that
%the true number of row and column classes are known.  In practice, this is a
%simplifying assumption and the optimal number of biclusters needs to be inferred from the data.
%While this is an open problem in the biclustering literature, the formalization of
%biclustering as an estimation process may help identify parallels that can be drawn
%from classical model selection procedures to this setting and is an
%interesting topic for future research.

%\section*{Supplementary Materials}
%\setcounter{section}{0}
%\renewcommand{\thesubsection}{Supplement \Alph{subsection}}
%\subsection{Additional Proofs and Theoretical Results}\label{supp-theory}
%(\href{run:./supp-theory.pdf}{supp-theory.pdf}).  Rigorous proofs of all theoretical results. 
%\subsection{Additional Empirical and Application Results}\label{supp-empirical}
%(\href{run:./supp-empirical.pdf}{supp-empirical.pdf}).  Additional empirical results for Bernoulli, Gaussian, and
%Student’s t distributed data, as well as an additional application example and stability results for the
%proposed biclustering algorithm.

\appendix
\section{Additional Proofs and Theoretical Results}\label{app-theory}

\subsection{Proof of Formal Consistency Theorem}

Our proof of the main theoretical results follows the same outline as
Section~\ref{S:consistency-results}.  In particular,
Proposition~\ref{L:pop-crit} is a rigorous statement analogous to
Proposition~\ref{P:population-crit}; Proposition~\ref{L:pop-perturb} is
analogous to Proposition~\ref{P:self-consistent};
Theorems~\ref{T:consistency}~and~\ref{T:finite-sample} are analogous to
Proposition~\ref{P:consistent}.

\subsubsection{Population criterion}

Here, we give a rigorous statement of Proposition~\ref{P:population-crit}.
That is, we establish that in the limit, $F$ is close to its nonrandom population
version, $G$, which depends only on the confusion matrices.

We will need some additional notation and a concentration result.  Define
normalized residual matrix $\mR(\vg, \vh) \in \R^{K \times L}$ by
\[
  \mR(\vg, \vh) = \rho^{-1} \{ \bar \mX(\vg, \vh) - \mE(\vg, \vh) \}.
\]
The law of large numbers establishes that for fixed $\vg$ and $\vh$, the
convergence $R_{kl}(\vg, \vh) \toP 0$ holds.  We can prove a stronger
concentration result, that this convergence is uniform over all $\vg$ and
$\vh$.

\begin{lemma}\label{L:mean-resid-conv}
Under conditions~\ref{A:identifiable}--\ref{A:matrix-lindeberg}, for all
$\varepsilon>0$,
\[
  \sup_{\sJ_\varepsilon }
  \|\mR(\vg, \vh)\|_\infty
  \toP 0,
\]
where $\|\mathbf{A}\|_\infty = \max_{k,l} |A_{kl}|$ for any matrix $\mathbf{A}$.
\end{lemma}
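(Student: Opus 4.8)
The plan is to reduce the supremum over the combinatorially large set $\sJ_\varepsilon$ to a union bound over a much smaller index set, and then to control each term via a truncation argument that exploits the Lindeberg condition (A\ref{A:matrix-lindeberg}). First, I observe that $\bar X_{kl}(\vg,\vh)$ depends on $(\vg,\vh)$ only through the partition of the $mn$ entries into $K \times L$ blocks determined by which rows carry label $k$ and which columns carry label $l$; there are at most $K^m L^n$ such labelings, so $\sup_{\sJ_\varepsilon}\|\mR(\vg,\vh)\|_\infty$ is a maximum of at most $K^m L^n K L$ random variables. Since $\log(K^m L^n) = O(m+n) = O(n)$, it suffices to show that for each fixed $(\vg,\vh) \in \sJ_\varepsilon$ and each $(k,l)$, the tail $\Pr(|R_{kl}(\vg,\vh)| > \eta)$ decays faster than $e^{-cn}$ for every $\eta > 0$ — or, more carefully, to get a uniform bound whose product with $K^m L^n$ still vanishes. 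This is the standard route used in \citet{bickel09} and \citet{zhao12}.

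Second, for a fixed $(\vg,\vh)$, write $R_{kl} = \rho^{-1}(\bar X_{kl} - E_{kl})$ as a normalized sum of independent, mean-zero terms: $R_{kl} = (\rho N_{kl})^{-1} \sum_{i,j} (X_{ij} - \mu_{c_id_j})\,\I(g_i = k, h_j = l)$, where $N_{kl} = \sum_{i,j}\I(g_i=k,h_j=l)$. On $\sJ_\varepsilon$ we have $N_{kl} \geq \varepsilon^2 mn$, so $\rho N_{kl} \geq \varepsilon^2 \rho m n$, which by (A\ref{A:sparseness}) grows. The key step is to split each centered summand at the threshold $\varepsilon' \sqrt{mn}\,\rho$ appearing in the Lindeberg condition: the truncated part has a variance controlled by (A\ref{A:avg-var}) and a Bernstein- or Chebyshev-type concentration (each truncated term is bounded by $\varepsilon'\sqrt{mn}\,\rho$, and there are at most $mn$ of them, so the sum is bounded by $\varepsilon' (mn)^{3/2}\rho$, while the standard deviation of the sum is $O(\sqrt{mn}\,\rho)$), giving the needed exponential tail; the tail part has expected absolute value $\sum_{i,j}\E[|X_{ij}-\mu_{c_id_j}|\,\I(|X_{ij}-\mu_{c_id_j}| > \varepsilon'\sqrt{mn}\rho)]$, which by Cauchy--Schwarz against the variance bound is $o(\sqrt{mn}\,\rho \cdot \sqrt{mn})= o(mn\,\rho) = o(\rho N_{kl})$ after dividing by $\varepsilon^2$. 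Dividing through by $\rho N_{kl}$ thus shows both pieces of $R_{kl}$ go to $0$, with the deviation piece having an exponential bound uniform in $(\vg,\vh)$.

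The main obstacle is getting the concentration bound for the truncated part to be strong enough — i.e.\ exponentially small with a rate that beats the $K^m L^n$ union-bound factor. A naive Chebyshev bound only gives a polynomial tail and will not survive the union bound. The resolution is either (i) to use the fact that after truncation the summands are bounded by $\varepsilon'\sqrt{mn}\,\rho$ while their total variance is $O(\rho^2 mn)$, so a Bernstein inequality gives a tail of order $\exp\{-c\,\eta^2 \varepsilon^4 mn / (\text{variance} + \text{range}\cdot\eta)\}$, and one checks that with the right ordering of limits (first $n \to \infty$, and noting $\varepsilon'$ can be taken small) this beats $e^{O(n)}$ because $mn \gg n$; or (ii) since the lemma only asserts convergence in probability, to be less aggressive — bound $\Pr(\sup > \eta)$ by $K^m L^n \max \Pr(|R_{kl}| > \eta)$ and show the max tends to $0$ faster than $(K^m L^n)^{-1}$, which again needs the Bernstein/Bennett rate. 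I would carry out option (i): fix $\eta > 0$, pick the Lindeberg parameter $\varepsilon'$ small relative to $\eta$, apply Bernstein to the truncated centered sum, absorb the (deterministically small) truncation-expectation correction, and conclude. The remaining steps — verifying $N_{kl} \geq \varepsilon^2 mn$ on $\sJ_\varepsilon$, and assembling the two pieces — are routine.
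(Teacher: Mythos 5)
Your overall route is the same as the paper's: the paper proves an abstract Lemma~\ref{L:refined-bernstein} that does exactly what you outline --- truncate each centered entry at the Lindeberg threshold $\varepsilon'\sqrt{mn}\,\rho$, control the truncation bias/tail via (A\ref{A:matrix-lindeberg}), apply Bernstein's inequality to the truncated sums, and close with a union bound over the at most $K^mL^n$ labelings, using block sizes $\geq \varepsilon^2 mn$ on $\sJ_\varepsilon$ and choosing the truncation constant small at the end --- and then applies it with index set $[m]\times[n]$. Your handling of the discarded tail (Markov on its expected absolute value, which is labeling-independent) differs only cosmetically from the paper's (probability that any entry is truncated at all).

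However, the step you yourself flag as the main obstacle rests on an incorrect variance bound. Assumption (A\ref{A:avg-var}) gives $\sum_{i,j}\E[(X_{ij}-\mu_{c_id_j})^2\mid\vc,\vd]=O(\rho\, mn)$, so the standard deviation of a block sum is $O(\sqrt{\rho\, mn})$, not $O(\rho\sqrt{mn})$ as you assert (equivalently, the total variance is $O(\rho\, mn)$, not $O(\rho^2 mn)$); in the sparse regime $\rho\to 0$ these differ substantially. With your understated variance the Bernstein exponent looks like order $mn$, which is why you conclude it beats $e^{O(n)}$ ``because $mn\gg n$'' --- but with the correct variance that reasoning fails. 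The correct accounting, as in the paper, is: by (A\ref{A:sparseness}) one has $\rho\sqrt{mn}\to\infty$, so the range term $\varepsilon'\sqrt{mn}\,\rho\cdot\eta\rho\varepsilon^2 mn$ eventually dominates the variance term $O(\rho\, mn)$ in Bernstein's denominator, and the resulting rate is only of order $\eta\varepsilon^2\sqrt{mn}/\varepsilon'\asymp n/\varepsilon'$ --- the \emph{same} order as the $\log(K^mL^n)=O(n)$ entropy --- so the union bound is beaten only because $\varepsilon'$ can be taken small relative to $\eta$, $\varepsilon$, and the limiting constants, not because $mn\gg n$. You do mention taking $\varepsilon'$ small, so the argument is repairable, but as written the key comparison is wrong and it hides where assumption (A\ref{A:sparseness}) actually enters the concentration step.
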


\begin{proof}
For all $t>0$,
\[
  \Pr\Big( \sup_{\sJ_\varepsilon } \norm{ \mR( \vg, \vh ) }_\infty > t \Big)\\
    \leq
        KL \Pr\Big( \sup_{I \in \sI_n}
            \rho^{-1} \Big|\sum_{ \{i,j\} \in I } (X_{ij} - \mu_{c_i d_j})\Big| > t\abs{I} \Big),
\]
where $\sI_n \subset 2^{[n]} \times 2^{[m]}$ is the set of all biclusters
such that $\hat{p}_k > \varepsilon$ for all $k$ and $\hat{q}_l > \varepsilon$
for all $l$. Since $\sI_n$ is a subset of the power set $2^{[nm]}$, by
Lemma~\ref{L:refined-bernstein} in Appendix~\ref{S:technical}, it follows that
the right hand side tends to zero.
%\[
%  \Pr\Big( \sup_{\sJ_\varepsilon } \norm{ \mR( \vg, \vh ) }_\infty > t \Big)
%  \rightarrow 0.
%\]
\end{proof}

With Lemma~\ref{L:mean-resid-conv}, we can establish that in the limit,
$F(\vg, \vh)$ is close to $G(\mC, \mD)$.

\begin{proposition}\label{L:pop-crit}
  $F$ is close to its population version in the sense that, for all $\varepsilon>0$,
    \[
\sup_{\sJ_\varepsilon }
| F(\vg, \vh)
- G(\mC, \mD)|
\toP 0.
\]
\end{proposition}

\begin{proof}
  The technical assumptions of $f$ imply that its first derivative is bounded.
  Therefore, $f$ is locally Lipschitz continuous with Lipschitz constant $c=\sup |f'(\mu)|$ for $\mu$ in a neighborhood of $\sM$ and
  \[
  \left | F(\vg,\vh)
    - G(\mC, \mD)
    \right |
        \leq c \| \mR(\vg,\vh) \|_\infty.
  \]
  From Proposition \ref{L:mean-resid-conv}, the right hand side converges to zero almost surely and the result follows.
\end{proof}

\subsubsection{Self-consistency}

Once we have established that $F$ is close to its population version, our next
task is to show that the population version is self-consistent.  We will need
a more precise version of Proposition~\ref{P:self-consistent}.

To state the result, for $\delta>0$, define
\begin{subequations}\label{E:confusion-nbhd}
\begin{align}
  \sP_\delta &= \{ \mC \in \sC_{\vp} : \max_{a \neq a'} C_{ak}C_{a'k} < \delta \}, \\
  \sQ_\delta &= \{ \mD \in \sC_{\vq} : \max_{b \neq b'} D_{bl}D_{b'l} < \delta \}.
\end{align}
\end{subequations}
A permutation of a diagonal matrix has only one non-zero entry in each column, so taking $\delta$ close to zero restricts the confusion matrices to be close to permutations of diagonal matrices.

\begin{proposition}\label{L:pop-perturb}
  If
  \(
    \min_a \{ p_a \} > \eta,
  \)
  \(
    \min_b \{ q_b \} > \eta,
  \)
  and $(\mC, \mD) \notin \sP_\delta \times \sQ_\delta$,
  then
   $G(\mC, \mD)$ is small, in the sense that
  \[
     G(\mC, \mD)
     - \sum_{a,b} p_a \, q_b \, f(\rho^{-1} \mu_{ab})
     \leq - \kappa \eta^2 \delta,
  \]
  where $\kappa$ is a constant independent of $\delta$ and $\eta$.
\end{proposition}

\begin{proof}
If $\mD \notin \sQ_\delta$, then for some $l$ and some $b \neq b'$, $D_{bl}
D_{b'l} \geq \delta$.  Since no two columns of $\mM$ are identical, there
exists an $a$ such that $\mu_{ab} \neq \mu_{ab'}$. Let $k$ be the index of the
largest element in row $a$ of matrix $\mC$; this element must be at least as
large as the mean, i.e.\
\[
  C_{ak} \geq \frac{[\mC \vone]_a}{K} \geq \frac{\eta}{K}.
\]
Let $W = [\mC^T \vone]_k [\mD^T \vone]_l$; this is nonzero.  Now, there exists
$\mu_\ast \in \sM$ such that
\[
  [\mC^T \mM \mD]_{kl}
  = C_{ak} D_{bl} \mu_{ab} + C_{ak} D_{b'l} \mu_{ab'}
  + (W - C_{ak} D_{bl} - C_{ak} D_{b'l}) \mu_\ast.
\]
Let $z = [\mC^T \mM \mD]_{kl} / W$.  Set
\(
  \kappa_0 = \inf_{\mu \in \sM} f''(\mu)
\)
and define $\mN = [\nu_{ab}] \in \R^{A \times B}$ with $\nu_{ab} =
f(\rho^{-1} \mu_{ab})$.
By a refined Jensen's inequality (Lemma~\ref{L:refined-jensen} in
Appendix~\ref{S:technical}),
it follows that
\begin{align*}
\frac{[\mC^T \mN \mD]_{kl}}{W} -  f(z)
%& \geq \frac{1}{2}\kappa \Big(\frac{C_{ak} D_{bl}}{W}(\mu_{ab}-z)^2 + \frac{C_{ak} D_{b'l}}{W}(\mu_{ab'}-z)^2 \Big) \\
& \geq \kappa_0\frac{C^2_{ak} D_{bl} D_{b'l}}{W^2} \Big( \frac{1}{2} (\mu_{ab}-z)^2 + \frac{1}{2}(z-\mu_{ab'})^2 \Big) \\
& \geq \kappa_0\frac{C^2_{ak} D_{bl} D_{b'l}}{W^2} \Big( \frac{1}{2}(\mu_{ab}-z) + \frac{1}{2}(z-\mu_{ab'})\Big)^2 \\
& = \kappa_0\frac{C^2_{ak} D_{bl} D_{b'l}}{4W^2} (\mu_{ab}-\mu_{ab'})^2.
\end{align*}
Thus
\begin{multline*}
  [\mC^T \vone]_k [\mD^T \vone]_l
  f\bigg(\frac{[\mC^T \mM_0 \mD]_{kl}}{[\mC^T \vone]_k [\mD^T \vone]_l}\bigg)
  -
  [\mC^T \mN \mD]_{kl} \\
  \leq
  -\frac{\kappa_0}{2} (\mu_{ab} - \mu_{ab'})^2 \frac{C_{ak}^2 D_{bl} D_{b'l}}{W}
  \leq
  -\frac{\kappa_0 \eta^2 \delta}{4 K^2} (\mu_{ab} - \mu_{ab'})^2.
\end{multline*}

This inequality only holds for one particular choice of $k$ and $l$; for other
choices, the left hand side is nonpositive by Jensen's inequality.
Defining
\[
\kappa_1 = \frac{\kappa_0}{4} \min_{a, b \neq b'} (\mu_{ab} - \mu_{ab'})^2,
\]
it follows that
\[
   G(\mC, \mD)
     - \sum_{a,b} p_a \, q_b \, f(\rho^{-1} \mu_{ab})
     \leq - \frac{\kappa_1 \, \eta^2 \delta}{K^2}.
\]
Similarly, if $\mC \notin \sP_\delta$, then the right hand side is
bounded by
\[
  - \kappa_2 \, \eta^2\delta / L^2
\]
where
\[
\kappa_2 = \frac{\kappa_0}{4} \min_{a \neq a', b} (\mu_{ab} - \mu_{a'b})^2.
\]
The result of the proposition follows with
$\kappa = \min(\kappa_1,\kappa_2)/\max\{K^2, L^2\}$.
\end{proof}

\subsubsection{Consistency}

We are now ready to state and prove the formal consistency theorem.

\begin{proof}[Proof of Theorem~\ref{T:consistency}]
Fix $\delta > 0$ and define $\sP_\delta$ and $\sQ_\delta$ as
in~\eqref{E:confusion-nbhd}.
We will show that if $(\vg, \vh) \in \sJ_\varepsilon$ and
if $(\mC(\vg), \mD(\vh)) \notin (\sP_\delta, \sQ_\delta)$,
then $F(\vg, \vh) < F(\vc, \vd)$ with probability approaching one.  Moreover,
this inequality holds uniformly over all such choices of $(\vg, \vh)$.  Since
$\delta$ is arbitrary, this implies that $\mC(\vhg)$ and $\mD(\vhh)$ converge
to permutations of diagonal matrices, i.e.~the proportions of misclassified
rows and columns converge to zero.

Set $r_n = \sup_{\sJ_\varepsilon} |F(\vg, \vh) - G(\mC(\vg),
\mD(\vh)|$.  Suppose $(\vg, \vh) \in \sJ_\varepsilon$.  In this case,
\begin{align*}
  F(\vg, \vh) - F(\vc, \vd)
    & \leq 2 r_n
       + \{G(\mC(\vg), \mD(\vh)) - G(\mC(\vc), \mD(\vd))\} \\
    &= 2 r_n
       + \big\{G(\mC(\vg), \mD(\vh))
     - \sum_{a,b} [\mC \vone]_a [\mD \vone]_b f([\mM_0]_{ab}) \big\}.
\end{align*}

Pick $\eta > 0$ smaller than $\min_a \{ p_a \}$ and $\min_b\{ q_b \}$.
By assumption, the true row and
column class proportions converge to $\vp$ and
$\vq$.  Thus,
for all $\vg \in K^{m}$ and $\vh \in L^n$, for $n$ large enough,
$[\mC(\vg)]_a \geq \eta$ and $[\mD(\vh)]_b \geq \eta$; this holds uniformly
over all choices of $(\vg, \vh)$.

Applying
Proposition~\ref{L:pop-perturb}, to the second term in the inequality, we get that
with probability approaching one,
\[
  F(\vg, \vh) - F(\vc, \vd) \leq 2 r_n - \kappa \eta^2 \delta
\]
for all $(\vg, \vh) \in \sJ_\varepsilon$ such that
$(\mC(\vg), \mD(\vh)) \notin \sP_\delta \times \sQ_\delta$.
By Proposition~\ref{L:pop-crit}, $r_n \toP 0$.  Thus, with probability
approaching one, $(\mC(\vhg), \mD(\vhh)) \in \sP_\delta \times \sQ_\delta$.  Since this
result holds for all $\delta$, all limit points of $\mC(\vhg)$ and $\mD(\vhh)$
must be permutations of diagonal matrices.
\end{proof}

\subsubsection{Empirical treatment of \texorpdfstring{$\rho$}{rho}}\label{S:rho}

For the Poisson and Gaussian relative entropy
functions~\eqref{E:f-poisson}~and~\eqref{E:f-gaussian}, the maximizer of the
criterion function~\eqref{E:criterion-general} does not depend on the scale
factor $\rho$.  This is immediately obvious in the Gaussian case.  For the
Poisson case, the function
$f_{\text{Poisson}}(\mu/\rho)=\frac{1}{\rho}\mu\log(\mu) -
\frac{1}{\rho}\mu(1+\log(\rho))$.  When summed over all biclusters, the
second term in this sum is equal to a constant so the value of $\mu$ which
maximizes $f_{\text{Poisson}}(\mu/\rho)$ does not depend on the value of
$\rho$.

This is not the case for the Binomial relative entropy function~\eqref{E:f-bernoulli}, but
the parameter $\rho$ is not identifiable in practice so it does not make sense
to try to estimate it.  For our simulations we use which maximizes
$\rho = 1$ in the fitting procedure, regardless of the true scale factor
for the mean matrix $\mM$.  Even though in the simulations
the identifiability condition doesn't hold for this choice of $\rho$,
we still get consistency, because the maximizer of the criterion with
$f_\text{Bernoulli}(\mu)$ is close to the maximizer with
$f_\text{Poisson}(\mu)$.  See \citet{perry12} for discussion of
a related phenomenon.

\subsection{Additional Technical Results}\label{S:technical}
\begin{lemma}\label{L:refined-bernstein}

For each $n$, let $X_{n,m}$, $1 \leq m \leq n$, be independent random
variables with $\E X_{n,m} = 0$.  Let $\rho_n$ be a sequence of
positive numbers.  Let $\sI_n$ be a subset of
the powerset $2^{[n]}$, with
\(
  \inf \{ |I| : I \in \sI_n\} \geq L_n.
\)
Suppose
\begin{enumerate}[(i)]
\item \label{a:finite-var}
\(
    \frac{1}{n \rho_n}
    \sum_{m=1}^{n}
      \E\abs{X_{n,m}}^2
\)
is uniformly bounded in $n$;

\item \label{a:lindeberg} For all $\varepsilon > 0$,
\(
    \frac{1}{n \rho_n^2}
    \sum_{m=1}^{n}
      \E(\abs{X_{n,m}}^2 ; \abs{X_{n,m}} > \varepsilon \sqrt{n} \rho_n)
  \to 0;
\)

\item \label{a:bigsize}
\(
  \varlimsup_{n \to \infty} \frac{n}{L_n} < \infty;
\)

\item \label{a:smallset}
\(
  \varlimsup_{n \to \infty} \frac{\log \abs{\sI_n}}{\sqrt{n}} < \infty.
\)

\item \label{a:sparseness}
\(
  \varlimsup_{n \to \infty} \rho_n \sqrt{n} = \infty.
\)

\end{enumerate}
Then
\[
\sup_{I \in \sI_n} \Big| \frac{1}{\rho_n \abs{I}} \sum_{m \in I} X_{n,m} \Big| \toP 0.
\]
\end{lemma}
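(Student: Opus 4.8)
The plan is to reduce the claim to Bernstein's inequality combined with a union bound over $\sI_n$, after a truncation step that neutralizes the heavy tails permitted by the Lindeberg condition~(ii). Fix the target level $\delta>0$; it suffices to bound $\Pr\big(\sup_{I\in\sI_n}|\rho_n^{-1}\abs{I}^{-1}\sum_{m\in I}X_{n,m}|>\delta\big)$. Introduce the truncation level $\tau_n=\varepsilon\sqrt{n}\,\rho_n$ for a constant $\varepsilon>0$ to be chosen at the end (small, depending on $\delta$), and write $X_{n,m}=Y_{n,m}+Z_{n,m}$ with $Y_{n,m}=X_{n,m}\I(\abs{X_{n,m}}\le\tau_n)$ and $Z_{n,m}=X_{n,m}\I(\abs{X_{n,m}}>\tau_n)$. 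Correspondingly split the normalized average over any $I$ into the tail part $\rho_n^{-1}\abs{I}^{-1}\sum_{m\in I}Z_{n,m}$, the deterministic centering $\rho_n^{-1}\abs{I}^{-1}\sum_{m\in I}\E Y_{n,m}$, and the centered truncated part $\rho_n^{-1}\abs{I}^{-1}\sum_{m\in I}(Y_{n,m}-\E Y_{n,m})$.

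Second, I would dispose of the first two (non-Bernstein) pieces by a crude first-moment estimate that is uniform in $I$. Since $\abs{I}\ge L_n$ and $\abs{\E Y_{n,m}}=\abs{\E Z_{n,m}}\le\E\abs{Z_{n,m}}$ (using $\E X_{n,m}=0$), both pieces are bounded in absolute value by $\rho_n^{-1}L_n^{-1}\sum_{m=1}^n\E\abs{Z_{n,m}}$. Now $\E\abs{Z_{n,m}}\le\tau_n^{-1}\E\big(\abs{X_{n,m}}^2;\abs{X_{n,m}}>\tau_n\big)$, so
\[
  \frac{1}{\rho_n L_n}\sum_{m=1}^n\E\abs{Z_{n,m}}
  \le \frac{1}{\varepsilon}\cdot\frac{n}{L_n}\cdot
      \frac{1}{n\rho_n^2}\sum_{m=1}^n\E\big(\abs{X_{n,m}}^2;\,\abs{X_{n,m}}>\varepsilon\sqrt{n}\,\rho_n\big),
\]
which tends to $0$ by assumption~(iii) (to bound $n/L_n$) and the Lindeberg condition~(ii). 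Markov's inequality then shows the tail part is $o_P(1)$ uniformly in $I$, and the centering part is deterministically $o(1)$ uniformly in $I$.

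Third, for the centered truncated part I would apply Bernstein's inequality. The summands $\tilde Y_{n,m}=Y_{n,m}-\E Y_{n,m}$ are independent, mean zero, bounded by $2\tau_n=2\varepsilon\sqrt{n}\,\rho_n$, and satisfy $\sum_{m\in I}\E\tilde Y_{n,m}^2\le\sum_{m=1}^n\E\abs{X_{n,m}}^2\le C n\rho_n$ for a constant $C$ and all large $n$, by~(i). Hence, for each $I\in\sI_n$,
\[
  \Pr\Big(\Big|\sum_{m\in I}\tilde Y_{n,m}\Big|>\tfrac{\delta}{2}\,\rho_n\abs{I}\Big)
  \le 2\exp\Big(-\frac{\delta^2\rho_n^2\abs{I}^2/8}
                       {C n\rho_n+\tfrac{1}{3}\varepsilon\delta\sqrt{n}\,\rho_n^2\abs{I}}\Big).
\]
Using $\abs{I}\ge L_n$ and $n/L_n$ bounded by~(iii), the exponent is at least a constant multiple of $\rho_n n\big/\big(C+\varepsilon\delta\sqrt{n}\,\rho_n\big)$; by~(v), the term $\varepsilon\delta\sqrt{n}\,\rho_n$ is the dominant one in the denominator, so the exponent is at least $c(\delta,\varepsilon)\sqrt{n}$, where $c(\delta,\varepsilon)$ is proportional to $1/\varepsilon$ and can be made as large as we wish by shrinking $\varepsilon$. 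Taking a union bound over the $\abs{\sI_n}$ sets and using $\log\abs{\sI_n}\le C_4\sqrt{n}$ for large $n$ (assumption~(iv)), the probability that the centered truncated part exceeds $\delta/2$ for some $I\in\sI_n$ is at most $2\exp\big((C_4-c(\delta,\varepsilon))\sqrt{n}\big)$, which tends to zero once $\varepsilon$ is fixed small enough that $c(\delta,\varepsilon)>C_4$. Combining the three pieces gives the conclusion.

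I expect the only real obstacle to be the calibration of the truncation level $\tau_n$. If $\tau_n$ is too large, the linear term $\tfrac{1}{3}M t$ in the Bernstein denominator dominates the variance term $C n\rho_n$, so the exponent grows only like $\sqrt{n}$ with a leading constant of order $1/\varepsilon$ — which must still be arranged to beat the $\sqrt{n}$ rate of $\log\abs{\sI_n}$ in~(iv); if $\tau_n$ is too small, the Lindeberg condition~(ii) no longer controls the discarded tails. Fixing $\delta$ first and then choosing $\varepsilon=\varepsilon(\delta)$ small resolves this tension, and it is precisely assumption~(v) that ensures the Bernstein exponent is genuinely of order $\sqrt{n}$ (and not smaller), so that the union bound closes. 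The remaining computations are routine moment bounds.
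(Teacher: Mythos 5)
Your proposal is correct and follows essentially the same route as the paper's proof: truncate at level $\varepsilon\sqrt{n}\,\rho_n$, control the effect of truncation via the Lindeberg condition (ii) together with (iii), apply Bernstein's inequality to the centered truncated sums, and close the union bound over $\sI_n$ by choosing $\varepsilon$ small enough that the exponent of order $\sqrt{n}/\varepsilon$ beats $\log\abs{\sI_n} = O(\sqrt{n})$ from (iv), with (v) ensuring the linear (boundedness) term dominates the variance term in the Bernstein denominator. The only cosmetic difference is that you control the discarded tail terms in $L^1$ and via Markov, whereas the paper bounds the probability that any variable exceeds the truncation level; both rest on the same Lindeberg bound.
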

\begin{proof}
Let $\varepsilon > 0$ be arbitrary.
Define
\(
  Y_{n,m} = \rho_n^{-1} X_{n,m} \I(\abs{X_{n,m}} \leq \varepsilon \sqrt{n} \rho_n),
\)
and note that
\begin{multline*}
  \Pr( Y_{n,m} \neq \rho_n^{-1} X_{n,m} \text{ for some $1 \leq m \leq n$})
    \leq \sum_{m=1}^{n} \Pr(\abs{X_{n,m}} > \varepsilon  \sqrt{n} \rho_n) \\
    \leq \frac{1}{\varepsilon^2 n \rho_n^2} \sum_{m=1}^{n} \E(\abs{X_{n,m}}^2 ; \abs{X_{n,m}} > \varepsilon \sqrt{n} \rho_n).
\end{multline*}

Fix an arbitrary $t > 0$.
Set
\(
  \mu_{n,m} = \E Y_{n,m}
\)
and for $I \in \sI_n$ define
\[
  \mu_{n}(I) = \frac{1}{\abs{I}} \sum_{m \in I} \mu_{n,m}.
\]
For $I \in \sI_n$, write
\[
  \Pr\Big(\sum_{m \in I} Y_{n,m} > t \, \abs{I}\Big)
  =
  \Pr\Big(\sum_{m \in I} (Y_{n,m} - \mu_{n,m})
    > \abs{I} \big(t - \mu_n(I)\big)\Big).
\]
Note that since $\E X_{n,m} = 0$, it follows that
\begin{align*}
  \abs{\mu_{n,m}}
    & =
    \abs{-\E(\rho_n^{-1} X_{n,m} ; \abs{X_{n,m}} > \varepsilon \sqrt{n} \rho_n)}\\
   & \leq
    \frac{1}{\varepsilon \sqrt{n} \rho^2_n}
    \E(\abs{X_{n,m}}^2 ; \abs{X_{n,m}} > \varepsilon \sqrt{n} \rho_n).
\end{align*}
Thus, by~(\ref{a:lindeberg}) and (\ref{a:bigsize}) we have that
\(
  \sup_{I \in \sI_n} \{ \abs{
  \mu_n(I)} \} \to 0;
\)
in particular, for $n$ large enough,
\(
  \sup_{I \in \sI_n} \{ \abs{\mu_n(I)} \} < \frac{t}{2}.
\)
Consequently, for $n$ large enough, uniformly for all $I$,
\[
  \Pr\Big(\sum_{m \in I} Y_{n,m} > t \, \abs{I}\Big)
  \leq
  \Pr\Big(\sum_{m \in I} (Y_{n,m} - \mu_{n,m})
    > t \, \abs{I} / 2 \Big).
\]
Similarly,
\[
  \Pr\Big(\sum_{m \in I} Y_{n,m} < - t \, \abs{I}\Big)
  \leq
  \Pr\Big(\sum_{m \in I} (Y_{n,m} - \mu_{n,m})
    < - t \, \abs{I} / 2 \Big).
\]

We apply Bernstein's inequality to the bound.  Define
\(
  \sigma_{n,m}^2 = \E(Y_{n,m} - \mu_{n,m})^2
\)
and $v_n(I) = \sum_{m \in I} \sigma_{n,m}^2$.
Note that $\abs{Y_{n,m} - \mu_{n,m}} \leq 2 \varepsilon \sqrt{n}$.
By Bernstein's inequality,
\[
  \Pr\Big( \Big| \sum_{m \in I} (Y_{n,m} - \mu_{n,m}) \Big| > t \, \abs{I} / 2 \Big)
    \leq
    2 \exp\Big\{
      -
      \frac{t^2 \abs{I}^2 / 8}{v_n(I) + \varepsilon t \abs{I} \sqrt n / 3}
    \Big\}.
\]
By~(\ref{a:finite-var}), (\ref{a:smallset}), and (\ref{a:sparseness}), it follows that for $n$ large enough,
$v_n(I) < \varepsilon t \abs{I} \sqrt{n} / 3,$ so
\[
  \Pr\Big( \Big| \sum_{m \in I_n} (Y_{n,m} - \mu_{n,m}) \Big| > t \abs{I} / 2 \Big)
    \leq
    2\exp\Big\{
      -
      \frac{t \abs{I}}{8 \varepsilon \sqrt n}
    \Big\}.
\]
We use this bound for each $I$ to get the union bound:
\begin{align*}
  \Pr\Big(\sup_{I \in \sI_n} \Big| \frac{1}{\abs{I}} \sum_{m \in I}
Y_{n,m} \Big| > t\Big)
  & \leq
    2\abs{\sI_n}
    \exp\Big\{
      -
      \frac{t L_n}{8 \varepsilon \sqrt n}
    \Big\} \\
  & =
     2\exp\Big\{
     \log \abs{\sI_n}
      -
      \frac{t L_n}{8 \varepsilon \sqrt n}
    \Big\}.
\end{align*}
By~(\ref{a:bigsize}) and (\ref{a:smallset}), it is possible to choose $\varepsilon$ such
that the right hand side goes to zero.  It follows then that
\begin{align*}
\Pr\Big(\sup_{I \in \sI_n} \Big| \frac{1}{\rho_n\abs{I}} \sum_{m \in I}
X_{n,m} \Big| > t\Big) &
\leq
\Pr( Y_{n,m} \neq \rho_n^{-1} X_{n,m} \text{ for some $1 \leq m \leq n$}) \\
& +
\Pr\Big(\sup_{I \in \sI_n} \Big| \frac{1}{\abs{I}} \sum_{m \in I}
Y_{n,m} \Big| > t\Big)\\
& \to 0.
\end{align*}

\end{proof}

\begin{lemma}[Refined Jensen's Inequality]\label{L:refined-jensen}
  Let $f : \R \to \R$ be twice differentiable and
  let $\sN$ be a convex set in $\R$. If $x_1, \dotsc, x_n$
  are points in $\sN$, and if $w_{1}, \dotsc, w_{n}$ are nonnegative
  numbers summing to one, then
  \[
\sum_{i=1}^{n} w_i f(x_i)
-
f(z)
\geq
\frac{1}{2}
\inf_{y \in \sN} f''(y)
\sum_{i=1}^{n} w_i (x_i - z)^2,
\]
  where $z = \sum_{i=1}^{n} w_i x_i$.
\end{lemma}

\begin{proof}
Define $\kappa_0 = \inf_{y \in \mathcal{N}} f''(y)$ and use the bound
\[
f(x_i) \geq f(z) + f'(z) (x_i - z) + \frac{\kappa_0}{2} (x_i - z)^2.
\]
\end{proof}

\subsection{Finite Sample Results}\label{S:finite-sample}
In this appendix we derive a finite sample tail bound for the probability that
the class assignments that maximize the profile likelihood are close to the true class labels.
To proceed in this setting, we make stronger distributional assumptions than in the asymptotic case.
Specifically, we assume here that the entries $X_{ij}|\vc, \vd$ follow a Gaussian distribution
with mean $\mu_{c_i d_j}$ and finite variance $\sigma^2$.  We proceed with the notation from the main text.

\begin{proposition}\label{fs:conc-bound}
For all $\varepsilon>0$, if $t < \sigma$ then
\[
\Pr \Big(  \sup_{\sJ_\varepsilon} \| \mR(\vg,\vh) \|_\infty > t \Big)
    \leq 2 K^{m+1}L^{n+1}
        \exp \Big( - \frac{L_n t^2}{4 \sigma^2} \Big),
\]
and if $t \geq \sigma$ then
\[
\Pr \Big(  \sup_{\sJ_\varepsilon} \| \mR(\vg,\vh) \|_\infty > t \Big)
    \leq 2 K^{m+1}L^{n+1}
        \exp \Big( - \frac{L_n t}{4 \sigma} \Big)
\]
where $\|\mA\|_\infty = \max_{k,l} |A_{kl}|$ for any matrix $\mA$.
\end{proposition}
\begin{proof}
If the entries $X_{ij}$ follow a Gaussian distribution with mean $\mu_{c_i d_j}$ and variance $\sigma^2$ then
\[
\E \Big( |X_{ij} - \mu_{c_i d_j}|^l \Big)
    \leq
        \frac{\sigma^2}{2} \sigma^{l-2} l!
\]
so the conditions of Bernstein's inequality hold.  It follows that for any
bicluster $I$, for all $t>0$,
\begin{align*}
\Pr \Big(
    \abs{\sum_{i,j \in I} X_{ij} - \mu_{c_i d_j}}
    > t|I| \Big)
    &   \leq 2 \exp \Big\{ -\frac{  |I|^2t^2 }{2(\sigma^2|I| + \sigma |I| t)} \Big\}\\
    &   \leq 2 \exp \Big\{ -\frac{  L_n t^2 }{4\max\{\sigma^2,\sigma t\}} \Big\}.
\end{align*}
Applying a union bound,
\[
\Pr \Big(  \sup_{\sJ_\varepsilon} \| \mR(\vg,\vh) \|_\infty > t \Big)
\leq
2 K^{m+1}L^{n+1}\exp \Big\{ -\frac{  L_n t^2 }{4\max\{\sigma^2,\sigma t\}} \Big\}.
\]
\end{proof}

Proposition~\ref{fs:conc-bound} is used to establish a finite sample bound on the
difference between $F(\vg,\vh)$ and its population version.

\begin{proposition}\label{fs:pop-bound}
Under conditions~\ref{A:rate-convex} and~\ref{A:deriv-bound}, for any $t>0$,
\[
\Pr \Big(
    \sup_{\sJ_\varepsilon}
    | F(\vg, \vh) - G(\mC, \mD)|
    >t
    \Big)
    \leq
    \Pr \Big(  \sup_{\sJ_\varepsilon} \| \mR(\vg,\vh) \|_\infty > \frac{t}{c} \Big)
\]
where $c = \sup |f'(\mu)|$ for $\mu$ in $\sM$.
\end{proposition}

Proposition~\ref{fs:pop-bound} is a direct consequence of the fact that $f$ is
locally Lipschitz continuous under conditions~\ref{A:rate-convex}
and~\ref{A:deriv-bound}. The details are similar to proof of
Proposition~\ref{L:pop-crit}.

The next step is to show that population version is maximized at the true class labels.

\begin{proposition}\label{fs:max-truelabels}
  Choose $\tau>0$ such that $\min_{a \neq a',b} (\mu_{ab}-\mu_{a'b})^2 \geq
\tau$ and $\min_{a ,b \neq b'} (\mu_{ab}-\mu_{ab'})^2 \geq \tau$. Then for all
$\varepsilon>0$, for $(\vg,\vh) \in \sJ_{\varepsilon}$ and $(\mC, \mD) \notin
\sP_\delta \cap \sQ_\delta$, $G(\mC,\mD)$ is small in the sense that
\[
G(\mC, \mD) - \sum_{a,b} p_a q_b f(\rho^{-1} \mu_{ab})
\leq - \frac{\tau \varepsilon^2 \delta}{4 \max\{K^2,L^2\}}.
\]
\end{proposition}
The proof of Proposition~\ref{fs:max-truelabels} is similar to the proof of Proposition~\ref{L:pop-perturb} except that the bound
on the difference uses $\varepsilon$ in place of the value $\eta$.  Some details follow.

\begin{proof}[Proof of Proposition~\ref{fs:max-truelabels}]
First note that in Proposition~\ref{L:pop-perturb}, we can let $a$ be the index of the
largest element in column $k$ of matrix $\mC$; then, since we are restricted
to the set $\sJ_\varepsilon$, this element must be at least
as large as
\[
C_{ak} \geq \frac{[\mC^T \vone]_k}{K} \geq \frac{\varepsilon}{K}.
\]
%Therefore the bound in Proposition~\ref{L:pop-perturb} can be rewritten as
%\[
%G(\mC,\mD) -
%\sum_{a, b} p_a q_b
%f(\mu_{ab})
%\leq - \kappa C \varepsilon^2 \delta.
%\]
Noting that for the Gaussian relative entropy function $f''(\mu) = 1$ for all $\mu \in \sM$, the remainder of the proof is similar to the proof of Proposition~\ref{L:pop-perturb}.
\end{proof}

We establish a finite sample bound by combining these results.

\begin{proof}{Proof of Theorem~\ref{T:finite-sample}}
Fix $\delta > 0$ and define $\sP_\delta$ and $\sQ_\delta$ as in Proposition~\ref{L:pop-perturb}.
Set $r_n = \sup_{\sJ_\varepsilon} |F(\vg, \vh) - G(\mC(\vg),
\mD(\vh)|$. Suppose $(\vg, \vh) \in \sJ_\varepsilon$. In this case,
\begin{align*}
  F(\vg, \vh) - F(\vc, \vd)
    &\leq 2 r_n
       + \{G(\mC(\vg), \mD(\vh)) - G(\mC(\vc), \mD(\vd))\} \\
    &= 2 r_n
       + \big\{G(\mC(\vg), \mD(\vh))
     - \sum_{a,b} [\mC \vone]_a [\mD \vone]_b f([\mM_0]_{ab}) \big\}.
\end{align*}
Applying Proposition~\ref{fs:max-truelabels} to the second term in the inequality, we get that
\[
F(\vg, \vh) - F(\vc, \vd) \leq 2r_n - \frac{\tau \varepsilon^2 \delta}{4\max\{K^2,L^2\}}
\]
for all $(\vg, \vh) \in \sJ_\varepsilon$ such that $(\mC, \mD) \notin
\sP_\delta \cap \sQ_\delta$.  The result
follows by applying Proposition~\ref{fs:pop-bound}.
\end{proof}

\section{Additional Empirical and Application Results}\label{app-empirical}

This appendix reports additional empirical results for Bernoulli, Gaussian,
and Student's~$t$ distributed data as well as an additional application example and stability results for the
proposed biclustering algorithm.

\subsection{Additional Empirical Results}\label{S:extra-sim}

Figures \ref{binSim}-\ref{tSim} present the average bicluster
misclassification rates for each sample size and Tables
\ref{tabBin}-\ref{tabt} report the standard deviations for the Bernoulli,
Gaussian, and $t$ simulations,
respectively.  Since the normalization for the DI-SIM algorithm is only specified for non-negative data,
the algorithm is run on the un-normalized matrix for the Gaussian and
non-standardized Student's $t$ examples.

For the Bernoulli simulation, we simulate from a block model with $K = 2$ row
clusters and $L = 3$ column clusters.
We vary the number of columns, $n$, between 200 to 1400 and we take the number
of rows as $m = \gamma n$ where $\gamma \in \{ 0.5,1,2 \}$.

We set the row and column class
membership probabilities as $\vp = (0.3, 0.7)$ and $\vq = (0.2, 0.3, 0.5)$.
We choose the matrix of block parameters to be
\[ \mM = \frac{b}{\sqrt{n}} \begin{pmatrix}
0.43 & 0.06 & 0.13  \\
0.10 & 0.34 & 0.17 \end{pmatrix}.\]
where the entries were selected to be on the same scale as Bickel and Chen (2009).
We vary $b$ between 5 and 20.  We generate
the data conditional on the row and column classes as
\(
  X_{ij} \mid \vc, \vd \sim \mathrm{Bernoulli}(\mu_{c_i d_j}).
\)
We initialize all the methods with $250$ random starting values.

\begin{figure}[H]
\begin{center}
\includegraphics[scale = 1, trim = 0mm 20mm 0mm 0mm]{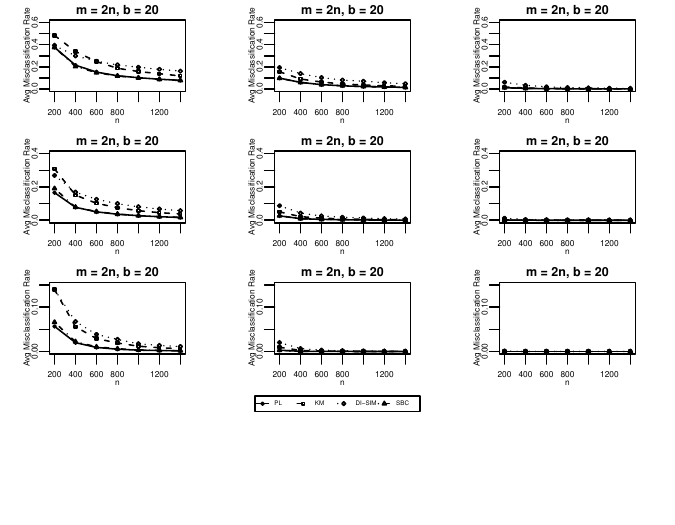}
\end{center}
\caption{Average misclassification rates for Bernoulli example over 100 simulations.}\label{binSim}
\end{figure}

\begin{table}[H]\tiny
\centering
\scalebox{0.9}{
\begin{tabular}{|l|ccc|ccc|ccc|ccc|}
\hline
\multicolumn{13}{|c|}{$m=0.5n$}\\
\hline
		&	\multicolumn{3}{c}{PL}	 &	\multicolumn{3}{|c|}{KM} & \multicolumn{3}{|c|}{DS} & \multicolumn{3}{|c|}{SBC}\\
\hline
n   & b=5   & b=10 & b=20 & b=5   & b=10 & b=20 & b=5   & b=10 & b=20 & b=5   & b=10 & b=20 \\
200	&	0.0834	&	0.0268	&	0.0075	&	0.1141	&	0.0515	&	0.0089	&	0.0760	&	0.0440	&	0.0249	&	0.1193	&	0.0249	&	0.0075	\\
400	&	0.0473	&	0.0145	&	0.0035	&	0.0932	&	0.0168	&	0.0051	&	0.0400	&	0.0346	&	0.0139	&	0.0417	&	0.0125	&	0.0030	\\
600	&	0.0224	&	0.0089	&	0.0021	&	0.0654	&	0.0118	&	0.0027	&	0.0347	&	0.0196	&	0.0075	&	0.0199	&	0.0082	&	0.0022	\\
800	&	0.0168	&	0.0061	&	0.0018	&	0.0266	&	0.0069	&	0.0024	&	0.0273	&	0.0173	&	0.0048	&	0.0135	&	0.0060	&	0.0017	\\
1000	&	0.0118	&	0.0049	&	0.0010	&	0.0156	&	0.0055	&	0.0014	&	0.0218	&	0.0134	&	0.0033	&	0.0098	&	0.0046	&	0.0010	\\
1200	&	0.0094	&	0.0041	&	0.0007	&	0.0122	&	0.0052	&	0.0011	&	0.0205	&	0.0114	&	0.0031	&	0.0086	&	0.0041	&	0.0008	\\
1400	&	0.0079	&	0.0031	&	0.0006	&	0.0100	&	0.0038	&	0.0010	&	0.0161	&	0.0089	&	0.0023	&	0.0076	&	0.0033	&	0.0006	\\
\hline
\multicolumn{13}{c}{}\\
\hline
\multicolumn{13}{|c|}{$m=n$}\\
\hline
n   & b=5   & b=10 & b=20 & b=5   & b=10 & b=20 & b=5   & b=10 & b=20 & b=5   & b=10 & b=20 \\
200	&	0.0528	&	0.0125	&	0.0020	&	0.1084	&	0.0181	&	0.0031	&	0.0514	&	0.0352	&	0.0104	&	0.0833	&	0.0110	&	0.0018	\\
400	&	0.0192	&	0.0053	&	0.0006	&	0.0265	&	0.0080	&	0.0012	&	0.0348	&	0.0147	&	0.0034	&	0.0163	&	0.0050	&	0.0005	\\
600	&	0.0114	&	0.0027	&	0.0003	&	0.0156	&	0.0049	&	0.0007	&	0.0248	&	0.0097	&	0.0018	&	0.0102	&	0.0031	&	0.0003	\\
800	&	0.0071	&	0.0021	&	0.0002	&	0.0107	&	0.0032	&	0.0003	&	0.0179	&	0.0064	&	0.0011	&	0.0070	&	0.0020	&	0.0001	\\
1000	&	0.0052	&	0.0013	&	0.0000	&	0.0073	&	0.0022	&	0.0002	&	0.0140	&	0.0046	&	0.0008	&	0.0052	&	0.0013	&	0.0002	\\
1200	&	0.0043	&	0.0008	&	0.0000	&	0.0067	&	0.0017	&	0.0001	&	0.0120	&	0.0033	&	0.0005	&	0.0040	&	0.0010	&	0.0000	\\
1400	&	0.0034	&	0.0008	&	0.0000	&	0.0057	&	0.0015	&	0.0001	&	0.0097	&	0.0026	&	0.0005	&	0.0032	&	0.0009	&	0.0001	\\
\hline
\multicolumn{13}{c}{}\\
\hline
\multicolumn{13}{|c|}{$m=2n$}\\
\hline
n   & b=5   & b=10 & b=20 & b=5   & b=10 & b=20 & b=5   & b=10 & b=20 & b=5   & b=10 & b=20 \\
200	&	0.0197	&	0.0037	&	0.0000	&	0.0505	&	0.0076	&	0.0000	&	0.0388	&	0.0112	&	0.0025	&	0.0390	&	0.0036	&	0.0000	\\
400	&	0.0062	&	0.0009	&	0.0000	&	0.0120	&	0.0026	&	0.0000	&	0.0212	&	0.0044	&	0.0003	&	0.0074	&	0.0008	&	0.0000	\\
600	&	0.0036	&	0.0004	&	0.0000	&	0.0073	&	0.0016	&	0.0000	&	0.0103	&	0.0025	&	0.0000	&	0.0041	&	0.0004	&	0.0000	\\
800	&	0.0025	&	0.0003	&	0.0000	&	0.0049	&	0.0008	&	0.0000	&	0.0065	&	0.0014	&	0.0001	&	0.0025	&	0.0003	&	0.0000	\\
1000	&	0.0015	&	0.0000	&	0.0000	&	0.0035	&	0.0005	&	0.0000	&	0.0045	&	0.0011	&	0.0001	&	0.0016	&	0.0001	&	0.0000	\\
1200	&	0.0012	&	0.0001	&	0.0000	&	0.0029	&	0.0003	&	0.0000	&	0.0041	&	0.0007	&	0.0000	&	0.0012	&	0.0001	&	0.0000	\\
1400	&	0.0009	&	0.0000	&	0.0000	&	0.0024	&	0.0002	&	0.0000	&	0.0032	&	0.0006	&	0.0000	&	0.0010	&	0.0000	&	0.0000	\\
\hline
\end{tabular}}

\caption{Standard deviations for Bernoulli example over 100 simulations.}\label{tabBin}
\end{table}

For the Gaussian simulation, we simulate from a block model with $K = 2$ row
clusters and $L = 3$ column clusters.
We vary the number of columns, $n$, between 50 to 400 and we take the number
of rows as $m = \gamma n$ where $\gamma \in \{ 0.5,1,2 \}$.

We set the row and column class
membership probabilities as $\vp = (0.3, 0.7)$ and $\vq = (0.2, 0.3, 0.5)$.
We choose the matrix of block parameters to be
\[
\mM = b \begin{pmatrix}
\phantom{-}0.47 & 0.15 & -0.60  \\
-0.26 & 0.82 & \phantom{-}0.80  \end{pmatrix}
\]
where the entries were simulated from a uniform distribution on $[-1,1]$.
We vary $b$ between 0.5 and 2.  We generate
the data conditional on the row and column classes as
\(
  X_{ij} \mid \vc, \vd \sim \mathrm{Gaussian}(\mu_{c_i d_j}, \sigma=1).
\)
We initialize all the methods with $100$ random starting values.

\begin{figure}[ht]
\begin{center}
\includegraphics[scale = 1, trim = 0mm 20mm 0mm 0mm]{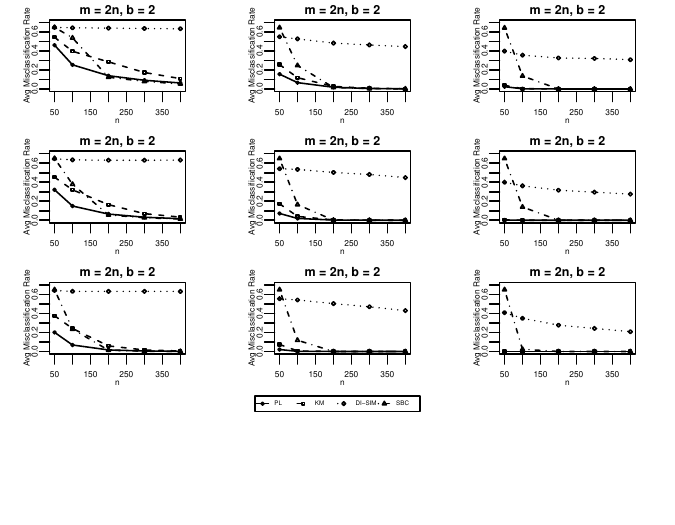}
\end{center}
\caption{Average misclassification rates for Gaussian example over 500 simulations.}\label{normSim}
\end{figure}

\begin{table}[H]\tiny
\centering
\scalebox{0.9}{
\begin{tabular}{|l|ccc|ccc|ccc|ccc|}
\hline
\multicolumn{13}{|c|}{$m=0.5n$}\\
\hline
		&	\multicolumn{3}{c}{PL}	 &	\multicolumn{3}{|c|}{KM} &	 \multicolumn{3}{|c|}{DS} & \multicolumn{3}{|c|}{SBC}\\	
\hline
n   & b=0.5	&	b=1	&	b=2 & b=0.5	&	b=1	&	b=2 & b=0.5	&	b=1	&	b=2 & b=0.5	&	b=1	&	b=2\\
50	&	0.1142	&	0.0718	&	0.0306	&	0.0928	&	0.0923	&	0.0491	&	0.0681	&	0.1105	&	0.1472	&	0.0642	&	0.0732	&	0.0836	\\
100	&	0.0610	&	0.0345	&	0.0051	&	0.0657	&	0.0675	&	0.0074	&	0.0580	&	0.1243	&	0.1374	&	0.1802	&	0.2503	&	0.2487	\\
200	&	0.0329	&	0.0127	&	0.0005	&	0.0722	&	0.0181	&	0.0005	&	0.0526	&	0.1251	&	0.1049	&	0.0289	&	0.0120	&	0.0003	\\
300	&	0.0235	&	0.0052	&	0.0000	&	0.0552	&	0.0068	&	0.0000	&	0.0481	&	0.1195	&	0.1029	&	0.0194	&	0.0043	&	0.0000	\\
400	&	0.0169	&	0.0023	&	0.0000	&	0.0363	&	0.0027	&	0.0000	&	0.0502	&	0.1134	&	0.1200	&	0.0146	&	0.0017	&	0.0000	\\
\hline
\multicolumn{13}{c}{}\\
\hline
\multicolumn{13}{|c|}{$m=n$}\\
\hline
n   & b=0.5	&	b=1	&	b=2 & b=0.5	&	b=1	&	b=2 & b=0.5	&	b=1	&	b=2 & b=0.5	&	b=1	&	b=2 \\
50	&	0.0968	&	0.0503	&	0.0080	&	0.0856	&	0.0975	&	0.0122	&	0.0627	&	0.1171	&	0.1541	&	0.0563	&	0.0596	&	0.0621	\\
100	&	0.0441	&	0.0175	&	0.0008	&	0.0710	&	0.0343	&	0.0010	&	0.0600	&	0.1193	&	0.1365	&	0.2415	&	0.2503	&	0.2546	\\
200	&	0.0235	&	0.0030	&	0.0000	&	0.0629	&	0.0046	&	0.0000	&	0.0447	&	0.1165	&	0.1188	&	0.0196	&	0.0025	&	0.0000	\\
300	&	0.0124	&	0.0007	&	0.0000	&	0.0285	&	0.0011	&	0.0000	&	0.0424	&	0.1166	&	0.1202	&	0.0107	&	0.0005	&	0.0000	\\
400	&	0.0075	&	0.0002	&	0.0000	&	0.0132	&	0.0003	&	0.0000	&	0.0428	&	0.1147	&	0.1261	&	0.0064	&	0.0001	&	0.0000	\\
\hline
\multicolumn{13}{c}{}\\
\hline
\multicolumn{13}{|c|}{$m=2n$}\\
\hline
n   & b=0.5	&	b=1	&	b=2 & b=0.5	&	b=1	&	b=2 & b=0.5	&	b=1	&	b=2 & b=0.5	&	b=1	&	b=2 \\
50	&	0.0783	&	0.0235	&	0.0013	&	0.0795	&	0.0753	&	0.0013	&	0.0596	&	0.1041	&	0.1558	&	0.0519	&	0.0519	&	0.0519	\\
100	&	0.0323	&	0.0040	&	0.0000	&	0.0839	&	0.0086	&	0.0000	&	0.0507	&	0.1084	&	0.1459	&	0.2528	&	0.2358	&	0.0837	\\
200	&	0.0106	&	0.0004	&	0.0000	&	0.0310	&	0.0006	&	0.0000	&	0.0409	&	0.1153	&	0.1329	&	0.0094	&	0.0002	&	0.0000	\\
300	&	0.0045	&	0.0000	&	0.0000	&	0.0085	&	0.0000	&	0.0000	&	0.0353	&	0.1172	&	0.1416	&	0.0035	&	0.0000	&	0.0000	\\
400	&	0.0020	&	0.0000	&	0.0000	&	0.0041	&	0.0000	&	0.0000	&	0.0354	&	0.1177	&	0.1469	&	0.0014	&	0.0000	&	0.0000	\\
\hline
\end{tabular}}

\caption{Standard deviations for Gaussian example over 500 simulations.}\label{tabNorm}
\end{table}

For the non-standardized Student's $t$ simulation, we use the same parameters as in the Gaussian simulation and we generate
the data conditional on the row and column classes as
\(
  X_{ij} \mid \vc, \vd \sim \mathrm{t}(\mu_{c_i d_j}, \sigma=1)
\)
with four degrees of freedom.  We initialize all the methods with $100$ random starting values.

\begin{figure}[ht]
\begin{center}
\includegraphics[scale = 1, trim = 0mm 20mm 0mm 0mm]{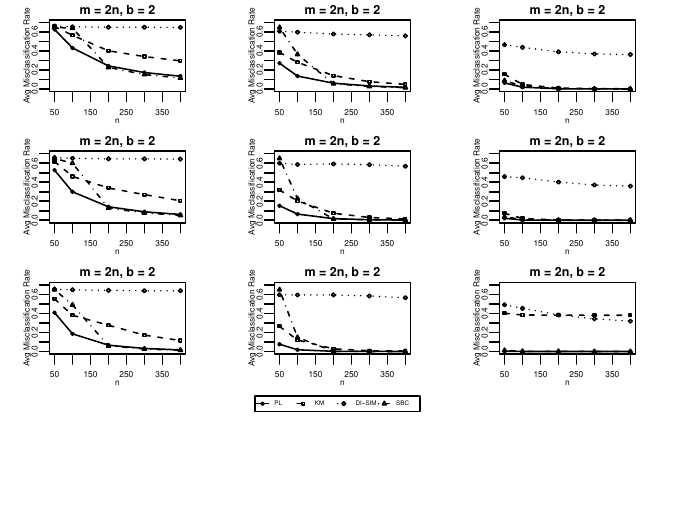}
\end{center}
\caption{Average misclassification rates for $t$ example over 500 simulations.}\label{tSim}
\end{figure}

\begin{table}[H]\tiny
\centering
\scalebox{0.9}{
\begin{tabular}{|l|ccc|ccc|ccc|ccc|}
\hline
\multicolumn{13}{|c|}{$m=0.5n$}\\
\hline
		&	\multicolumn{3}{c}{PL}	 &	\multicolumn{3}{|c|}{KM} &	 \multicolumn{3}{|c|}{DS} & \multicolumn{3}{|c|}{SBC}\\	
\hline
n   & b=0.5   & b=1 & b=2 & b=0.5   & b=1 & b=2 & b=0.5   & b=1 & b=2 & b=0.5   & b=1 & b=2\\
50	&	0.0958	&	0.1018	&	0.0526	&	0.0792	&	0.1028	&	0.1176	&	0.0662	&	0.0845	&	0.1426	&	0.0643	&	0.0669	&	0.1000	\\
100	&	0.0877	&	0.0485	&	0.0175	&	0.0884	&	0.0871	&	0.0775	&	0.0524	&	0.0838	&	0.1394	&	0.0694	&	0.2431	&	0.0535	\\
200	&	0.0429	&	0.0233	&	0.0040	&	0.0620	&	0.0892	&	0.0430	&	0.0409	&	0.0836	&	0.1138	&	0.0441	&	0.0227	&	0.0239	\\
300	&	0.0298	&	0.0137	&	0.0013	&	0.0620	&	0.0831	&	0.0322	&	0.0370	&	0.0880	&	0.1020	&	0.0228	&	0.0116	&	0.0232	\\
400	&	0.0235	&	0.0093	&	0.0007	&	0.0640	&	0.0769	&	0.0192	&	0.0335	&	0.0922	&	0.0910	&	0.0189	&	0.0078	&	0.0126	\\
\hline
\multicolumn{13}{c}{}\\
\hline
\multicolumn{13}{|c|}{$m=n$}\\
\hline
n   & b=0.5   & b=1 & b=2 & b=0.5   & b=1 & b=2 &  b=0.5   & b=1 & b=2 & b=0.5   & b=1 & b=2\\
50	&	0.1073	&	0.0724	&	0.0229	&	0.0798	&	0.0895	&	0.1034	&	0.0561	&	0.0877	&	0.1439	&	0.0561	&	0.0574	&	0.0762	\\
100	&	0.0690	&	0.0344	&	0.0052	&	0.0856	&	0.1021	&	0.0633	&	0.0452	&	0.0808	&	0.1384	&	0.1358	&	0.2477	&	0.0416	\\
200	&	0.0324	&	0.0108	&	0.0009	&	0.0659	&	0.0979	&	0.0298	&	0.0372	&	0.0691	&	0.1206	&	0.0444	&	0.0097	&	0.0130	\\
300	&	0.0220	&	0.0048	&	0.0003	&	0.0668	&	0.0706	&	0.0194	&	0.0298	&	0.0761	&	0.1006	&	0.0176	&	0.0039	&	0.0003	\\
400	&	0.0155	&	0.0024	&	0.0002	&	0.0827	&	0.0556	&	0.0134	&	0.0278	&	0.0846	&	0.1032	&	0.0125	&	0.0019	&	0.0134	\\
\hline
\multicolumn{13}{c}{}\\
\hline
\multicolumn{13}{|c|}{$m=2n$}\\
\hline
n   & b=0.5   & b=1 & b=2 & b=0.5   & b=1 & b=2 & b=0.5   & b=1 & b=2 & b=0.5   & b=1 & b=2\\
50	&	0.0976	&	0.0514	&	0.0086	&	0.0843	&	0.1057	&	0.0659	&	0.0520	&	0.0828	&	0.1343	&	0.0519	&	0.0519	&	0.0484	\\
100	&	0.0528	&	0.0147	&	0.0007	&	0.0774	&	0.1141	&	0.0440	&	0.0443	&	0.0702	&	0.1357	&	0.2231	&	0.2397	&	0.0225	\\
200	&	0.0211	&	0.0035	&	0.0002	&	0.0752	&	0.0807	&	0.0288	&	0.0342	&	0.0646	&	0.1189	&	0.0191	&	0.0025	&	0.0139	\\
300	&	0.0114	&	0.0009	&	0.0000	&	0.0989	&	0.0469	&	0.0216	&	0.0274	&	0.0746	&	0.0965	&	0.0096	&	0.0007	&	0.0000	\\
400	&	0.0068	&	0.0003	&	0.0000	&	0.1086	&	0.0361	&	0.0176	&	0.0260	&	0.0868	&	0.0938	&	0.0058	&	0.0003	&	0.0000	\\
\hline
\end{tabular}}

\caption{Standard deviations for $t$ example over 500 simulations.}\label{tabt}
\end{table}

Similar to the Poisson simulation, biclustering based on the profile log-likelihood criterion performs at least as well as the other methods and shows signs of convergence in all three examples.  These results provide further verification of the theoretical findings and support the use of biclustering based on the profile log-likelihood criterion.

\FloatBarrier

\subsection{Additional Application - MovieLens}\label{S:extra-app}

%Collaborative filtering uses the behavior of similar consumers to provide
%better recommendations of products to new individuals.

Since consumer habits likely vary depending on products, biclustering review-website data
can help identify structure in
the data and identify groups of consumers and groups of products with similar
patterns.  As an application of this we apply biclustering to the MovieLens
dataset \citep{movielens}.

The MovieLens dataset consists of 100,000 movie reviews on 1682 movies by 943
users.  Each user has rated at least 20 movies and each movie is rated on a
scale from one to five.  In addition to the review rating, the release date
and genre of each movie is available as well as some demographic information
about each user including gender, age, occupation and zip code.

In order to retain customers, movie-renting services strive to recommend new
movies to individuals who are likely to view them.  Since most users only
review movies that they have already seen, we can use the structure of the
user-movie review matrix to identify associations between users and viewing
habits of movies.  Specifically, we consider the 943$\times$1682 binary matrix
$\mX$ where $X_{ij} = 1$ if user $i$ has rated movie $j$ and $X_{ij} = 0$
otherwise.  To find structure in $\mX$, we biclustered the rows and columns of
$\mX$ using the profile likelihood based on the Bernoulli
criterion~\eqref{E:f-bernoulli}.

To determine a reasonable selection for the number of biclusters we varied the
number of user groups, $K$, and the number of movie groups, $L$, each from
$1$ to $10$.  For each combination of $K$ and $L$, we computed the optimal cluster assignments based on 250 random starting values.  Figure~\ref{screeMovieLens} presents the scree plots as functions of $K$ and $L$.  From the left scree plot, we see little change to the deviance when increasing $L$ beyond $4$.  From the right scree plot, the deviance does not decrease much beyond when increasing $K$ beyond $3$.  Based on these two plots, we set $K = 3$ and $L = 4$.  For $K = 3$ and $L = 4$, we experimented with using up to 2000 random starting values, but found no change to the resulting log-likelihood.  

\begin{figure}[H]
  \includegraphics[scale=0.8]{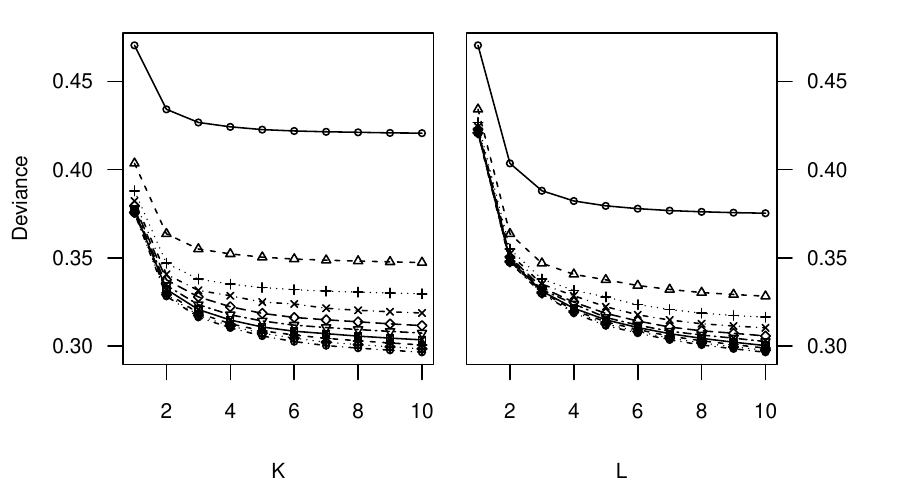}
  \caption{MovieLens likelihood for different values of $K$ and $L$}
  \label{screeMovieLens}
\end{figure}

\begin{figure}[H]
\begin{center}
\includegraphics[trim = 0mm 20mm 0mm 0mm,scale=.9]{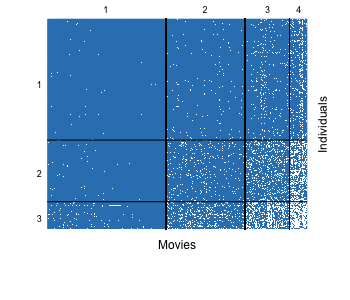}
\end{center}
\caption{Heatmap generated from MovieLens data reflecting the varying review
patterns in the different biclusters. Blue identifies movies with no review and white identifies rated movies.}\label{hmMovieLens}
\end{figure}

We compared the resulting bicluster assignments to those found by DI-SIM, KM and SBC when $K = 3$ and $L = 4$.
The cluster assignments varied between the four methods, with the least amount of disagreement
between PL and SBC (73.0\% of cluster assignments agreed)
and the most disagreement between Di-Sim and the profile-likelihood method (24.2\% of clusters agreed).

Using the built-in package functions, SBC and CBC select different values of $K$ and $L$.
SBC selects $K = 13$ and $L = 14$.  The cluster assignments result in a Rand Index 
of 0.842 and 0.650 when compared to the PL row and column cluster assignments, respectively.
CBC selected 1 row cluster and 1,519 column clusters, where many of the
column clusters only contained one row/column.  CBC is designed for continuous data
and the default settings do not appear to produce meaningful clusters in this example.  

Figure~\ref{hmMovieLens} presents the heatmap of the data based on the
resulting bicluster assignments from the profile-likelihood method, with the ordering 
of the clusters determined
by the total number of a reviews in each cluster.  Roughly speaking, user
group 3 is consistently active across all movie groups with increasing
activity as the popularity of the movie increases.  The reviewing habits of
user group 2 follow a similar pattern but to a lesser extent.  In contrast,
user group 1 is consistently inactive with the only exceptions being movie
group 4.

The median ages within the user group were 33, 30, and 29, and the
percentages of male users within each group were 68.7\%, 72.8\%, and 77.4\%.
These statistics suggest that there is some age and gender effect on the
reviewing habits of the users.

Table~\ref{tableMovieLens}
reports the top ten movies in each group.  The eclectic mix of genres within
each movie group suggests that the rating behavior of users is not explained
by genre alone.

Figure~\ref{bpMovieLens} presents a boxplot comparing
the distributions of the movie release years for each group.  We can see a
clear ordering of the movie groups by median release date.
%From Figure~\ref{bpMovieLens},
It appears that the users in all three groups rate movies from all time periods, but 
reviewing behavior varies based on movie popularity.
The biclusters here suggest that individuals in group~3 are more likely to rate under-reviewed
movies, whereas individuals in group~1 primarily rate popular movies.
%It appears that the users in group 1 are more likely to rate recent movies whereas users
%in groups~2~and~3 rate movies from all time periods.  The biclusters here
%suggest that individuals in group~3 are more likely to rate under-reviewed movies,
%whereas individuals in group~1 primarily rate popular new releases.

\begin{table}[H]\tiny
\centering
\scalebox{0.9}{
\begin{tabular}{|c|c|}
\hline
Group 1	&	Group 2		\\
\hline
Mrs. Parker and the Vicious Circle (1994)                         	&	Santa Clause, The (1994)                        	\\
Miserables, Les (1995)                                        	&	Sleeper (1973)                                 	\\
Lawnmower Man 2: Beyond Cyberspace (1996)                         	&	Sword in the Stone, The (1963)                  	\\
Richie Rich (1994)                                               	&	Cook the Thief His Wife \& Her Lover, The (1989)	\\
Candyman: Farewell to the Flesh (1995)                           	&	Somewhere in Time (1980)                        	\\
Ice Storm, The (1997)                                            	&	Mulholland Falls (1996)                        	\\
Funny Face (1957)                                                 	&	Crumb (1994)                                    	\\
Umbrellas of Cherbourg, The (1964)	&	I.Q. (1994)                                    	\\
My Family (1995)                                                  	&	Legends of the Fall (1994)                      	\\
Top Hat (1935)                                                   	&	Alice in Wonderland (1951)                     	\\

\hline
\multicolumn{2}{c}{}\\
\hline
Group 3 & Group 4 \\
\hline
 Beauty and the Beast (1991)                                                	&	Star Wars (1977)              	\\
Batman (1989)                                                              	&	Contact (1997)                	\\
Young Frankenstein (1974)                                                  	&	Fargo (1996)                 	\\
Star Trek IV: The Voyage Home (1986)                                       	&	Return of the Jedi (1983)    	\\
Citizen Kane (1941)                                                        	&	Liar Liar (1997)              	\\
Fifth Element, The (1997)                                                  	&	English Patient, The (1996)   	\\
Gandhi (1982)                                                              	&	Scream (1996)                 	\\
Face/Off (1997)                                                            	&	Toy Story (1995)             	\\
Dr. Strangelove Or (1963)	&	 Air Force One (1997)         	\\
Tin Cup (1996)                                                             	&	 Independence Day (ID4) (1996)	\\

\hline
\end{tabular}}

\caption{The top ten movies in each cluster based on the total number of reviews.}\label{tableMovieLens}
\end{table}

\begin{figure}[H]\label{bpMovieLens}
\begin{center}
\includegraphics[trim = 0mm 20mm 0mm 0mm,scale=.8]{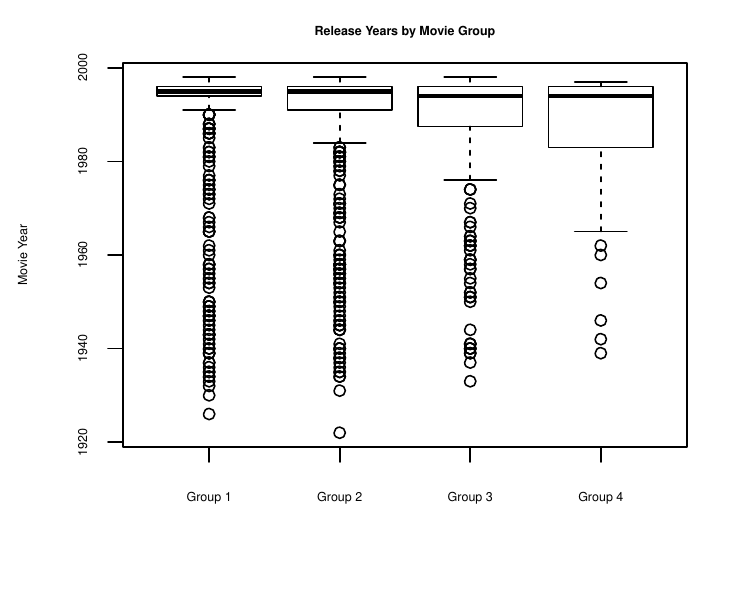}
\end{center}
\caption{Boxplot comparing the different clusters based on movie release dates.}
\end{figure}

\subsection{Algorithm Stability Analysis}\label{S:stability}

For each of the three data applications (GovTrack, AGEMAP, and MovieLens), we
report the local optima found after running our biclustering method with 1000
random initializations. The number of local optima (``modes'') found in each
of the three data sets are $4$, $3$, and $10$, respectively.

For each of the modes, we report the log-likelihood value (``Log-Lik.''), the
number of replicates that found the mode (``Count''), and the row and column
cluster differences (``Row Diff.'' and ``Column Diff.'') when compared to the
best mode out of the 1000 replicates.  In reporting the difference we give
both the absolute difference (``Abs.''), the number of rows or columns with
different labels, and the relative difference (``Rel.''), the proportion of
rows or columns with different labels.

Even in the AGEMAP application, where the best local optimum was rare, the
results of the algorithm are very stable, giving consistent cluster
assignments for over 98\% of the rows and columns. We would not recommend
using our algorithm with just a single random initialization, but it appears
that using 100--1000 random initializations suffices to give reliable results.

\subsubsection*{GovTrack}

The algorithm found the best local optimum 36.8\% of the time. The result of
the algorithm agreed with the best local optimum found in all of the row
labels and at least 98.7\% of the column labels 99.9\% of the time.

\begin{center}
\begin{tabular}{crrrrrr}
\toprule
Mode & Log-Lik. & Count & \multicolumn{2}{c}{Row Diff.}
                        & \multicolumn{2}{c}{Column Diff.} \\
     &          &       & Abs. & Rel. (\%) & Abs. & Rel. (\%) \\
\midrule
1    & $-36246$ & 368   & $-$ & $-$ & $-$  & $-$ \\
2    & $-36248$ & 27    & 0   & 0   &  2 &  $0.4$ \\
3    & $-36430$ & 604   & 0   & 0   &  6 &  $1.3$ \\
4    & $-44732$ & 1     & 0   & 0   & 70 & $15.4$ \\
\bottomrule
\end{tabular}
\end{center}

\noindent
444 rows, 545 columns; $K = 2$ row clusters, $L = 4$ column clusters.

\subsubsection*{AGEMAP}

In all cases, the value of the log-likelihood found was extremely similar,
though the best local optimum was found only 0.2\% of the time.  The row
clusters were always the same, and at least 99.8\% of the column labels agreed
in all reported local optima.

\begin{center}
\begin{tabular}{crrrrrr}
\toprule
Mode & Log-Lik. & Count & \multicolumn{2}{c}{Row Diff.}
                        & \multicolumn{2}{c}{Column Diff.} \\
     &          &       & Abs. & Rel. (\%) & Abs. & Rel. (\%) \\
\midrule
1 & 7408.883 & 2    & $-$  & $-$ & $-$  & $-$ \\
2 & 7408.881 & 909  & 0 & 0 & 18 & $0.1$ \\
3 & 7408.872 & 89   & 0 & 0 & 34 & $0.2$ \\
\bottomrule
\end{tabular}
\end{center}

\noindent
39 rows, 17864 columns; $K = 3$ row clusters, $L = 5$ column clusters.

\clearpage

\subsubsection*{MovieLens}

There was much more variability in the output than in the other two
applications, though the first 5 modes (which appeared in $99.1\%$ of the
replicates) had very similar log-likelihood values, and agreed for over
$98.1$\% of their row labels and over $97.9$\% of their column labels.

\begin{center}
\begin{tabular}{crrrrrr}
\toprule
Mode & Log-Lik. & Count & \multicolumn{2}{c}{Row Diff.}
                        & \multicolumn{2}{c}{Column Diff.} \\
     &          &       & Abs. & Rel. (\%) & Abs. & Rel. (\%) \\
\midrule
1  & $-262910$ & 137 & $-$ &    $-$ & $-$ & $-$ \\
2  & $-262915$ &  54 &  18 &  $1.9$ &   1 &  $0.6$ \\
3  & $-262916$ &   3 &   0 &  $0.0$ &  19 &  $1.1$ \\
4  & $-262919$ &  39 &  17 &  $1.8$ &  13 &  $0.8$ \\
5  & $-262923$ & 758 &   3 &  $0.3$ &  36 &  $2.1$ \\
6  & $-263892$ &   4 & 142 & $15.1$ & 310 & $18.4$ \\
7  & $-264085$ &   1 &  87 &  $9.2$ & 294 & $17.5$ \\
8  & $-265183$ &   2 &  56 &  $5.9$ & 417 & $24.8$ \\
9  & $-265183$ &   1 &  57 &  $6.0$ & 416 & $24.7$ \\
10 & $-265185$ &   1 &  62 &  $6.6$ & 418 & $24.9$ \\
\bottomrule
\end{tabular}
\end{center}

\noindent
943 rows, 1682 columns; $K = 3$ row clusters, $L = 4$ column clusters.

%The biclusters discovered here suggest that a movie-renting service should
%recommend under-reviewed movies to individuals in user group~3, and it should
%recommend new releases to users in group 1.  This information could be use to
%inform marketing decisions or to augment an ensemble-based recommendation
%engine like that of \citet{toscher09}.

\bibliography{refs5}
\bibliographystyle{apalike}

\end{document}